\DeclarePairedDelimiterX\braket[2]{\langle}{\rangle}{#1 \delimsize\vert #2}
\newtheorem{theorem}{Theorem}
\newtheorem{lemma}{Lemma}
\newtheorem{corollary}{Corollary}
\newcommand{\eq}[1]{(\ref{eq:#1})}
\newcommand{\thm}[1]{\hyperref[thm:#1]{Theorem~\ref*{thm:#1}}}
\newcommand{\defn}[1]{\hyperref[defn:#1]{Definition~\ref*{defn:#1}}}
\newcommand{\lem}[1]{\hyperref[lem:#1]{Lemma~\ref*{lem:#1}}}
\newcommand{\prop}[1]{\hyperref[prop:#1]{Proposition~\ref*{prop:#1}}}
\newcommand{\fig}[1]{\hyperref[fig:#1]{Figure~\ref*{fig:#1}}}
\newcommand{\tab}[1]{\hyperref[tab:#1]{Table~\ref*{tab:#1}}}
\renewcommand{\sec}[1]{\hyperref[sec:#1]{Section~\ref*{sec:#1}}}
\newcommand{\append}[1]{\hyperref[append:#1]{Appendix~\ref*{append:#1}}}
\newcommand{\cor}[1]{\hyperref[cor:#1]{Corollary~\ref*{cor:#1}}}
\newcommand{\N}{\mathbb{N}}
\newcommand{\R}{\mathbb{R}}
\newcommand{\C}{\mathbb{C}}
\DeclareMathOperator{\Sym}{Sym}
\newcommand{\rev}{\mathrm{rev}}
\newcommand{\rand}{\mathrm{rand}}
\newcommand{\comm}{\mathrm{comm}}
\newcommand{\demp}{\mathrm{demp}}
\newcommand{\remp}{\mathrm{remp}}
\newcommand{\comment}[1]{}
\newcommand{\norm}[1]{\left\lVert#1\right\rVert}
\pgfplotsset{
	log x ticks with fixed point/.style={
		xticklabel={
			\pgfkeys{/pgf/fpu=true}
			\pgfmathparse{exp(\tick)}%
			\pgfmathprintnumber[fixed relative, precision=3]{\pgfmathresult}
			\pgfkeys{/pgf/fpu=false}
		}
	},
	log y ticks with fixed point/.style={
		yticklabel={
			\pgfkeys{/pgf/fpu=true}
			\pgfmathparse{exp(\tick)}%
			\pgfmathprintnumber[fixed relative, precision=3]{\pgfmathresult}
			\pgfkeys{/pgf/fpu=false}
		}
	}
}
\patchcmd{\@maketitle}{\raggedright}{\centering}{}{}
\patchcmd{\@maketitle}{\raggedright}{\centering}{}{}
\begin{document}

\title{Faster quantum simulation by randomization}
\author[aff1,aff2,aff3]{Andrew M.\ Childs}
\author[aff2,aff3,aff4]{Aaron Ostrander}
\author[aff1,aff2,aff3]{Yuan Su}
\affiliation[aff1]{Department of Computer Science, University of Maryland}
\affiliation[aff2]{Institute for Advanced Computer Studies, University of Maryland}
\affiliation[aff3]{Joint Center for Quantum Information and Computer Science, University of Maryland}
\affiliation[aff4]{Department of Physics, University of Maryland}

\maketitle

%%%%%%%%%%%%%%%%%%%%%%%%%%%%%%%%%%%%%%%%%%%%%%%%%%%%%%%%%%%%%%%%%%%%%%%%%%%%%%
\begin{abstract}
Product formulas can be used to simulate Hamiltonian dynamics on a quantum computer by approximating the exponential of a sum of operators by a product of exponentials of the individual summands. This approach is both straightforward and surprisingly efficient. We show that by simply randomizing how the summands are ordered, one can prove stronger bounds on the quality of approximation for product formulas of any given order, and thereby give more efficient simulations. Indeed, we show that these bounds can be asymptotically better than previous bounds that exploit commutation between the summands, despite using much less information about the structure of the Hamiltonian. Numerical evidence suggests that the randomized approach has better empirical performance as well.
\end{abstract}

%%%%%%%%%%%%%%%%%%%%%%%%%%%%%%%%%%%%%%%%%%%%%%%%%%%%%%%%%%%%%%%%%%%%%%%%%%%%%%
\section{Introduction}
\label{sec:intro}
%!TEX root = randsim.tex

Simulating quantum dynamics is one of the major potential applications of quantum computers. The apparent intractability of simulating quantum dynamics with a classical computer led Feynman \cite{Fey82} and others to propose the idea of quantum computation. Lloyd gave the first explicit quantum algorithm for simulating the dynamics of local Hamiltonians \cite{Llo96}, and later work showed that the more general class of sparse Hamiltonians can also be simulated efficiently \cite{AT03}. Quantum simulation can be applied to understand the behavior of various physical systems---including many-body physics \cite{RWS12}, quantum chemistry \cite{WBCHT14,Pou15,BMWAW15}, and quantum field theory \cite{JLP12}---and designing new quantum algorithms \cite{CCDFGS03,FGG07,HHL09,BCOW17,BS17}.

The main ingredient in Lloyd's algorithm is the Lie product formula, which provides a first-order approximation to the exponential of a sum as a product of exponentials of the summands. Given Hermitian operators $H_1,\ldots,H_L$ (which we refer to as the summands of the Hamiltonian $H = \sum_{j=1}^L H_j$) and a complex number $\lambda$, the Lie product formula
\begin{equation}
S_1(\lambda):=\prod_{j=1}^{L}\exp(\lambda H_j),
\end{equation}
approximates the exponentiation
\begin{equation}
\label{eq:idealexp}
V(\lambda):=\exp\biggl(\lambda\sum_{j=1}^{L}H_j\biggr)
\end{equation}
in the sense that $V(\lambda) \approx S_1({\lambda}/{r})^r$ for large $r$. Suzuki systematically extended this formula to give a $(2k)$th-order approximation $S_{2k}$, defined recursively by
\begin{equation}
\begin{aligned}
\label{eq:pf2k}
S_{2}(\lambda)&:=\prod_{j=1}^{L}\exp\biggl(\frac{\lambda}{2}H_{j}\biggr)\prod_{j=L}^{1}\exp\biggl(\frac{\lambda}{2}H_{j}\biggr)\\
S_{2k}(\lambda)&:=S_{2k-2}(p_{k}\lambda)^2 \, S_{2k-2}((1-4p_{k})\lambda) \, S_{2k-2}(p_{k}\lambda)^2
\end{aligned}
\end{equation}
with $p_{k}:=1/(4-4^{1/(2k-1)})$ \cite{Suz91}. Again we have $V(\lambda) \approx S_{2k}({\lambda}/{r})^r$ for large $r$, and the approximation obtained with a given value of $r$ improves as $k$ increases (albeit with a prefactor that grows exponentially in $k$). We refer to all such formulas as \emph{product formulas}. When they are used for quantum simulation, $H$ is chosen to be the Hamiltonian and $\lambda=-it$, where $t$ is the evolution time. Although other approaches to quantum simulation have better proven asymptotic performance as a function of various parameters \cite{BCK15,BC12,BCCKS13,BCCKS14,LC17,LC16,HHKL18}, product formulas perform well in practice \cite{CMNRS18} and are widely used in experimental implementations \cite{Bar15,BCC06,Lan11} due to their simplicity and the fact that they do not require any ancilla qubits.

The main challenge in applying product formulas to quantum simulation is to choose the number of segments $r$ to ensure the simulation error is at most some allowed threshold $\epsilon$. To simulate $H=\sum_{j=1}^{L}H_j$ for time $t$, rigorous error analysis shows that
\begin{equation}
r_{1,\text{det}}=O\biggl(\frac{(t\Lambda L)^2}{\epsilon}\biggr)
\end{equation}
suffices to ensure error at most $\epsilon$ for the first-order formula and
\begin{equation}
r_{2k,\text{det}}=O\biggl(\frac{(t\Lambda L)^{1+\frac{1}{2k}}}{\epsilon^{\frac{1}{2k}}}\biggr)
\end{equation}
suffices for $(2k)$th order \cite{BACS05}, where $\Lambda := \max_{j}\norm{H_{j}}$ is a spectral-norm upper bound on the summands and $\text{det}$ indicates that these formulas are constructed deterministically. However, numerical simulations suggest that the product formula algorithm can perform significantly better in practice than the best proven error bounds demonstrate \cite{BMWAW15,RWS12,RWSWT17,CMNRS18}. Indeed, recent work suggests that it can even asymptotically outperform more sophisticated simulation algorithms with better proven running times \cite{CMNRS18}. This dramatic gap between the provable and the actual behavior of product formula simulation suggests that it may be possible to significantly improve their analysis, and thereby give more efficient algorithms for quantum simulation.

It is sometimes possible to improve the analysis of product formulas using further information about the form of the Hamiltonian. In particular, the cost of simulation can be reduced when many pairs of summands commute \cite{Llo96,BMWAW15,CMNRS18}. However, this approach can only be applied for structured Hamiltonians that contain many commuting summands. Furthermore, the best known bounds of this type give only modest improvement, remaining orders of magnitude away from the empirical performance even in cases where many summands commute \cite{CMNRS18}.

Randomization can be a powerful tool for improving the performance of quantum simulation algorithms. For example, Poulin et al.\ gave improved simulations of time-dependent Hamiltonians by sampling the Hamiltonian at random times \cite{PQSV11}. Closer in spirit to the present paper, Zhang studied the effect of randomizing the ordering and/or duration of evolutions in a product formula, showing in particular that randomly ordering the summands in the first-order formula in either forward or reverse order can give an improved algorithm \cite{Zha10}.

In this paper, we explore a closely related approach for higher-order product formulas, which can achieve significantly better asymptotic performance.
Specifically, we analyze the effect of randomly permuting the summands. The resulting algorithm is not much more complicated than a deterministic product formula, but the savings in the simulation cost are substantial. For any permutation $\sigma\in\Sym(L)$ of the $L$ summands, let
\begin{equation}
\label{eq:permpf}
\begin{aligned}
S_2^\sigma(\lambda)&:=\prod_{j=1}^{L}\exp\biggl(\frac{\lambda}{2}H_{\sigma(j)}\biggr)\prod_{j=L}^{1}\exp\biggl(\frac{\lambda}{2}H_{\sigma(j)}\biggr)\\
S_{2k}^\sigma(\lambda)&:=[S_{2k-2}^\sigma(p_{k}\lambda)]^{2}S_{2k-2}^\sigma((1-4p_{k})\lambda)[S_{2k-2}^\sigma(p_{k}\lambda)]^{2}.
\end{aligned}
\end{equation}
We show that the $(2k)$th-order randomized simulation has error
\begin{equation}
\norm{\mathcal{V}(-it)-\biggl(\frac{1}{L!}\sum_{\sigma\in\Sym(L)}\mathcal{S}_{2k}^\sigma\bigl(-it/r\bigr)\biggr)^r}_\diamond
= O\biggl(\frac{(\Lambda tL)^{4k+2}}{ r^{4k+1}}+\frac{(\Lambda t)^{2k+1}L^{2k}}{ r^{2k}}\biggr).
\end{equation}
where $\mathcal{V}(-it)$ and $\mathcal{S}_{2k}^\sigma(-it/r)$ are quantum channels describing the unitary transformation $V(-it)$ and the random unitary $S_{2k}^\sigma(-it/r)$, respectively, and $\norm{\cdot}_\diamond$ is the diamond norm (defined in \sec{strategy}).

Our analysis uses a mixing lemma of Campbell and Hastings \cite{Has17,Cam17} to bound the diamond norm distance from the ideal evolution. (Even for the first-order case, this improves over the analysis of Zhang, which uses similar methods but only bounds the trace distance from the ideal final state \cite{Zha10}, a metric that does not account for entanglement with a reference system.) Informally, the lemma of \cite{Has17,Cam17} states that if we can approximate a desired operation as the average over some set of operations, then the overall error depends linearly on the error in the average operation but only quadratically on the error in any individual operation. Standard error bounds for product formulas do not depend on how the summands are ordered, but we show that randomizing the ordering gives a more accurate average evolution.
We motivate this approach in \sec{strategy}, where we consider the effect of randomizing how the summands are ordered in the simple case of the first-order formula. Assuming $\Lambda:=\max_j \norm{H_j}$ is constant, the randomized first-order algorithm has gate complexity $g_{1}^\text{rand}=O\big(t^{1.5}L^{2.5}/\epsilon^{0.5}\big)$, improving over $g_{1}^{\det}=O\bigl(t^2L^3/\epsilon\bigr)$ in the deterministic case.

Analyzing the effect of randomization on higher-order formulas is more challenging. For terms of order at most $L$ in the Taylor expansion of a product formula, the majority of the error comes from terms in which no summands are repeated. We call such contributions \emph{nondegenerate terms}. In \sec{rand}, we give a combinatorial argument to compute nondegenerate terms of the average evolution $\frac{1}{L!}\sum_{\sigma\in\Sym(L)}S_{2k}^\sigma(\lambda)$ in closed form. (In fact, we prove a more general result that applies to the average evolution as a special case.) As a corollary, we show that the nondegenerate terms completely cancel in the randomized product formula.

\sec{errorbound} presents our main technical result, an upper bound on the error in a randomized higher-order product formula simulation. This bound follows by using the mixing lemma to combine an error bound for the average evolution operator with standard product formula error bounds for the error of the individual terms. \sec{performance} discusses the overall performance of the resulting algorithm and compares it with deterministic approaches. For the $(2k)$th-order product formula, assuming $\Lambda:=\max_j \norm{H_j}$ is constant, our randomized Hamiltonian simulation algorithm has complexity
\begin{equation}
g_{2k}^\text{rand}
=\max\bigg\{O\biggl(t L^2\biggl(\frac{t L}{\epsilon}\biggr)^{\frac{1}{4k+1}}\biggr),
O\biggl(t L^2\biggl(\frac{t}{\epsilon}\biggr)^{\frac{1}{2k}}\biggr)\bigg\},
\end{equation}
compared to $g_{2k}^{\det}=O\bigl(t L^2(tL/\epsilon)^{\frac{1}{2k}}\bigr)$ in the deterministic case. Thus our algorithm always improves the dependence on $L$ and sometimes achieves better dependence on $t$ and $\epsilon$ as well.

We also show in \sec{performance} that our bound can outperform a previous bound that takes advantage of the structure of the Hamiltonian. Specifically, we compare our randomized product formula algorithm with the deterministic algorithm using the commutator bound of \cite{CMNRS18} for a one-dimensional Heisenberg model in a random magnetic field. We find that over a significant range of parameters, the randomized algorithm has better proven performance, despite using less information about the form of the Hamiltonian.

In light of the large gap between proven and empirical performance of product formulas, it is natural to ask whether randomized product formulas still offer an improvement under the best possible error bounds. To address this question, we present numerical comparisons of the deterministic and randomized product formulas in \sec{numerics}. In particular, we show that the randomized approach can sometimes outperform the deterministic approach even with respect to their empirical performance.

Finally, we conclude in \sec{discussion} with a brief discussion of the results and some open questions.

%%%%%%%%%%%%%%%%%%%%%%%%%%%%%%%%%%%%%%%%%%%%%%%%%%%%%%%%%%%%%%%%%%%%%%%%%%%%%%
\section{The power of randomization}
\label{sec:strategy}
%!TEX root = randsim.tex

To see how randomness can improve a product formula simulation, consider a simple Hamiltonian expressed as a sum of two operators, $H=H_1+H_2$. The Taylor expansion of the first-order formula as a function of $\lambda \in \C$ is
\begin{equation}
S_1(\lambda)=\exp(\lambda H_1)\exp(\lambda H_2)
=I+\lambda(H_1+H_2)+\frac{\lambda^2}{2}(H_1^2+2H_1H_2+H_2^2)+O(\lambda^3),
\end{equation}
whereas the Taylor series of the ideal evolution is
\begin{equation}
V(\lambda)=\exp((H_1+H_2)\lambda)
=I+\lambda(H_1+H_2)+\frac{\lambda^2}{2}(H_1^2+H_1H_2+H_2H_1+H_2^2)+O(\lambda^3).
\end{equation}
Using the triangle inequality, we can bound the spectral-norm error as
\begin{equation}
\norm{V(\lambda)-S_1(\lambda)}\leq \norm{[H_1,H_2]}\frac{|\lambda|^2}{2}+O((\Lambda|\lambda|)^3),
\end{equation}
where $\Lambda:=\max\{\norm{H_1},\norm{H_2}\}$. Since $H_1$ and $H_2$ need not commute, $S_1(\lambda)$ approximates $V(\lambda)$ to first order in $\lambda$, as expected.

It is clearly impossible to approximate $V(\lambda)$ to second order using a product of only two exponentials of $H_1$ and $H_2$: any such product can have only one of the products $H_1H_2$ and $H_2H_1$ in its Taylor expansion, whereas $V(\lambda)$ contains both of these products in its second-order term. However, we can obtain both products by taking a uniform mixture of $S_1(\lambda)$ and
\begin{equation}
S_1^\rev(\lambda):=\exp(\lambda H_2)\exp(\lambda H_1).
\end{equation}
Indeed, a simple calculation shows that
\begin{equation}
\norm{V(\lambda)-\frac{1}{2}\bigl(S_1(\lambda)+S_1^\rev(\lambda)\bigr)} = O\bigl((\Lambda|\lambda|)^3\bigr).
\end{equation}
However, $\bigl(S_1(-it)+S_1^\rev(-it)\bigr)/2$ is not a unitary operation in general. We could in principle implement a linear combination of unitaries using the techniques of \cite{BCCKS13}, but such an approach would use ancillas and could have high cost, especially when the Hamiltonian contains many summands. A simpler approach is to apply one of the two operations $S_1(-it)$ and $S_1^\rev(-it)$ chosen uniformly at random (as in Algorithm~2 of \cite{Zha10}), thereby implementing a quantum channel that gives a good approximation to the desired evolution.

We now introduce some notation that is useful to analyze the performance of randomized product formulas. Let $X$ be a matrix acting on a finite-dimensional Hilbert space $\mathcal{H}$. We write $\norm{X}$ for its spectral norm (the largest singular value) and $\norm{X}_1$ for its trace norm (the sum of its singular values, i.e., its Schatten 1-norm). Let $\mathcal{E}\colon X\mapsto \mathcal{E}(X)$ be a linear map on the space of matrices on $\mathcal{H}$. The diamond norm of $\mathcal{E}$ is
\begin{equation}
\norm{\mathcal{E}}_\diamond:=\max\{\norm{(\mathcal{E}\otimes\mathds{1}_\mathcal{H})(Y)}_1:\norm{Y}_1\leq 1\},
\end{equation}
where the maximization is taken over all matrices $Y$ on $\mathcal{H}\otimes\mathcal{H}$ satisfying $\norm{Y}_1\leq 1$.

The following mixing lemma bounds how well we can approximate a unitary operation using a random unitary channel. Specifically, the error is linear in the distance between the target unitary and the average of the random unitaries, and only quadratic in the distance between the target unitary and each individual random unitary.

\begin{lemma}[Mixing lemma \cite{Cam17,Has17}]
	\label{lem:mix}
	Let $V$ and $\{U_j\}$ be unitary matrices, with associated quantum channels $\mathcal{V}\colon \rho \mapsto V\rho V^\dagger$ and $\mathcal{U}_j\colon \rho \mapsto U_j\rho U_j^\dagger$, and let $\{p_j\}$ be a collection of positive numbers satisfying $\sum_jp_j=1$. Suppose that
	\begin{enumerate}[topsep=4pt,itemsep=0pt,label=\normalfont(\roman*),align=left,leftmargin=*]
		\item $\norm{U_j-V}\leq a$ for all $j$ and
		\item $\bigl\|(\sum_jp_jU_j)-V\bigr\| \leq b$.
	\end{enumerate}
	Then the average evolution $\mathcal{E}:=\sum_jp_j\mathcal{U}_j$ satisfies
	$
	\norm{\mathcal{E}-\mathcal{V}}_{\diamond}\leq a^2+2b
	$.
\end{lemma}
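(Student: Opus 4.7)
My plan is to expand $\mathcal{E}-\mathcal{V}$ into linear and quadratic pieces in the deviations $U_j-V$, then control the linear piece using hypothesis (ii) and the quadratic piece using hypothesis (i). Set $E_j := U_j - V$ and $F := \sum_j p_j E_j$, so that $\norm{E_j}\leq a$ and $\norm{F}\leq b$. Expanding $U_j \rho U_j^\dagger = (V+E_j)\rho(V+E_j)^\dagger$ and averaging over $j$ (after the $V\rho V^\dagger$ contributions cancel) gives
\begin{equation*}
\mathcal{E}(\rho) - \mathcal{V}(\rho) \;=\; V\rho F^\dagger + F\rho V^\dagger + \sum_j p_j\, E_j\rho E_j^\dagger.
\end{equation*}

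To pass to the diamond norm, I would apply the same identity with $V$, $E_j$, and $F$ replaced by their tensor products with the identity on an auxiliary copy of $\mathcal{H}$, pick an arbitrary $\rho$ on $\mathcal{H}\otimes\mathcal{H}$ with $\norm{\rho}_1\leq 1$, and combine the triangle inequality on the trace norm with the H\"older-type bound $\norm{ABC}_1 \leq \norm{A}\,\norm{B}_1\,\norm{C}$. Since the spectral norm is invariant under tensoring with the identity and $\norm{V}=1$, the two linear terms each have trace norm at most $\norm{F}\leq b$, contributing $2b$ in total. For the quadratic term, I would first apply the triangle inequality to pull the sum over $j$ outside the trace norm and then bound each summand by $\norm{E_j}^2 \leq a^2$; since the weights $p_j$ sum to one, this contributes $a^2$. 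Adding the two bounds yields $\norm{\mathcal{E}-\mathcal{V}}_\diamond \leq a^2+2b$.

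The argument is essentially algebraic and I do not anticipate any serious obstacle. The only subtle point is to exploit the averaging at the right moment: for the linear pieces one must keep $\sum_j p_j E_j$ together as $F$ and invoke hypothesis (ii), which yields the factor $b$ (rather than $a$); whereas for the quadratic piece one must apply the triangle inequality over $j$ \emph{first} and only then use hypothesis (i) on each term, which is what produces the quadratic factor $a^2$ (rather than a linear $a$). Doing these in the reverse order would weaken the bound and destroy the whole point of the mixing lemma.
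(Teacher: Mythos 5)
Your proof is correct and complete: the decomposition of $\mathcal{E}(\rho)-\mathcal{V}(\rho)$ into linear and quadratic pieces in $E_j = U_j - V$, followed by H\"older's inequality $\norm{ABC}_1 \leq \norm{A}\norm{B}_1\norm{C}$ on the extended space, is exactly the standard argument. The paper itself gives no proof of this lemma---it is quoted with a citation to Campbell and Hastings---so there is nothing internal to compare against, but your derivation matches the one in those references, and your closing remark about when to apply the triangle inequality over $j$ (before invoking hypothesis (i), after invoking hypothesis (ii)) correctly identifies the point of the lemma.
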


To simulate the Hamiltonian $H=H_1+H_2$ for time $t$, we divide the evolution into $r$ segments of duration $t/r$ and implement each segment via the random unitary operation
\begin{equation}
\frac{1}{2}\bigl(\mathcal{S}_1(-it/r)+\mathcal{S}_1^\rev(-it/r)\bigr)
\end{equation}
using one bit of randomness per segment, where $\mathcal{S}_1$ and $\mathcal{S}_1^\rev$ are the quantum channels associated with $S_1$ and $S_1^\rev$. Invoking the mixing lemma with $a=O\bigl({(\Lambda t)^2}/{r^2}\bigr)$ and $b=O\bigl({(\Lambda t)^3}/{r^3}\bigr)$, we find that
\begin{equation}
\norm{\mathcal{V}(-it/r)-\frac{1}{2}\bigl(\mathcal{S}_1(-it/r)+\mathcal{S}_1^\rev(-it/r)\bigr)}_\diamond
= O\bigg(\frac{(\Lambda t)^3}{r^3}\bigg).
\end{equation}
Since the diamond norm distance between quantum channels is subadditive under composition \cite[p.~178]{bk:wat}, the error of the entire simulation is
\begin{equation}
\norm{\mathcal{V}(-it)-\frac{1}{2^r}\bigl(\mathcal{S}_1(-it/r)+\mathcal{S}_1^\rev(-it/r)\bigr)^r}_\diamond
= O\bigg(\frac{(\Lambda t)^3}{r^2}\bigg).
\end{equation}
Thus the randomized first-order formula is effectively a second-order formula.

This approach easily extends to a sum of $L$ operators, again effectively making the first-order formula accurate to second order (cf.\ \cite{Zha10}, which shows the same result with respect to trace distance of the output state). Keeping track of all the prefactors, we find the following error bound for the randomized first-order formula.

\begin{theorem}[Randomized first-order error bound]
	\label{thm:frst}
	Let $\{H_j\}_{j=1}^{L}$ be Hermitian matrices. Let
	\begin{equation}
	V(-it):=\exp\biggl(-it\sum_{j=1}^{L}H_j\biggr)
	\end{equation}
	be the evolution induced by the Hamiltonian $H=\sum_{j=1}^{L}H_j$ for time $t\in\R$.
	Define
	\begin{equation}
	\label{eq:permpf1}
	\begin{aligned}
	S_1(\lambda)&:=\prod_{j=1}^{L}\exp(\lambda H_j) &&and &
	S_{1}^\rev(\lambda)&:=\prod_{j=L}^{1}\exp(\lambda H_j).
	\end{aligned}
	\end{equation}
	Let $r\in\N$ be a positive integer and $\Lambda:=\max\norm{H_j}$. Then
	\begin{equation}
	\label{eq:randpf1error}
	\norm{\mathcal{V}(-it)-\frac{1}{2^r}\bigl(\mathcal{S}_1(-it/r)+\mathcal{S}_1^\rev(-it/r)\bigr)^r}_\diamond
	\leq \frac{(\Lambda |t| L)^4}{r^3}
	\exp\biggl(2\frac{\Lambda|t|L}{r}\biggr)
	+ \frac{2(\Lambda|t|L)^3}{3r^2} \exp\biggl(\frac{\Lambda|t|L}{r}\biggr)
	\end{equation}
	where, for $\lambda=-it$, we associate channels $\mathcal{V}(\lambda)$, $\mathcal{S}_1(\lambda)$, and $\mathcal{S}_1^\rev(\lambda)$ with the unitaries $V(\lambda)$, $S_1(\lambda)$, and $S_1^\rev(\lambda)$, respectively.
\end{theorem}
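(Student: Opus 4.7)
The plan is to apply the mixing lemma (\lem{mix}) to a single segment of size $\lambda := -it/r$ with $\{U_j\} = \{S_1(\lambda), S_1^\rev(\lambda)\}$, uniform weights $p_j = 1/2$, and target $V = V(\lambda)$; then compose the $r$ segments by subadditivity of the diamond norm. This reduces the theorem to two spectral-norm estimates: (i) an individual bound $a$ on $\norm{S_1(\lambda) - V(\lambda)}$ and $\norm{S_1^\rev(\lambda) - V(\lambda)}$, and (ii) an averaged bound $b$ on $\bigl\|\tfrac{1}{2}(S_1(\lambda) + S_1^\rev(\lambda)) - V(\lambda)\bigr\|$. The mixing lemma then gives per-segment error $a^2 + 2b$, and the composed bound is $r(a^2+2b)$.

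For (i), I would expand $S_1(\lambda) = \prod_{j=1}^L \exp(\lambda H_j)$ as a multiple power series and bound the norm of the degree-$k$ coefficient using $\sum_{k_1+\cdots+k_L=k} \prod_j 1/k_j! = L^k/k!$, giving norm at most $(\Lambda L)^k/k!$; the same bound holds for $V(\lambda)$ at order $k$. The order-$0$ and order-$1$ terms agree, so the triangle inequality and the tail estimate $\sum_{k\ge 2} x^k/k! \le (x^2/2)\exp(x)$ with $x = \Lambda L|\lambda|$ yield $a \le (\Lambda L |\lambda|)^2 \exp(\Lambda L|\lambda|)$. For (ii), the key observation—already previewed for $L=2$ in \sec{strategy}—is that while $S_1(\lambda)$ contributes the ordered products $H_iH_j$ with $i<j$ at order $2$ and $S_1^\rev(\lambda)$ contributes the ordered products with $i>j$, their average reproduces the symmetric $\tfrac{1}{2}(\sum_j H_j)^2$ that appears in $V(\lambda)$; thus the average agrees with $V(\lambda)$ through order $2$. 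Summing $k\ge 3$ tails using $\sum_{k\ge 3} x^k/k! \le (x^3/6)\exp(x)$ then gives $b \le \tfrac{1}{3}(\Lambda L|\lambda|)^3 \exp(\Lambda L|\lambda|)$.

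Substituting these bounds into \lem{mix}, using subadditivity of the diamond norm under composition, and finally plugging in $|\lambda|=|t|/r$, I get exactly $r(a^2+2b) = (\Lambda|t|L)^4 r^{-3} \exp(2\Lambda|t|L/r) + \tfrac{2}{3}(\Lambda|t|L)^3 r^{-2} \exp(\Lambda|t|L/r)$, matching \eq{randpf1error}. The main obstacle is the order-$2$ cancellation for general $L$: I need to verify carefully that the reverse ordering in $S_1^\rev$ pairs every $H_iH_j$ ($i<j$) with the missing $H_jH_i$, and that no extra coefficient asymmetry spoils the match with $\tfrac{1}{2}(\sum_j H_j)^2$. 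Once this combinatorial cancellation is in hand, the remaining work—tracking the exponential prefactors through the Taylor tails and applying subadditivity—is routine.
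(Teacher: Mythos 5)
Your proposal is correct and follows essentially the same route the paper takes: apply the mixing lemma per segment with the individual bound $a=(\Lambda L|\lambda|)^2\exp(\Lambda L|\lambda|)$ and the averaged bound $b=\tfrac{1}{3}(\Lambda L|\lambda|)^3\exp(\Lambda L|\lambda|)$, then compose via subadditivity of the diamond norm. The second-order cancellation you flag does go through for general $L$ (the average of $\sum_{i<j}H_iH_j$ and $\sum_{i>j}H_iH_j$ plus the $\tfrac12\sum_j H_j^2$ terms reproduces $\tfrac12(\sum_j H_j)^2$ exactly), and your constants match \eq{randpf1error}.
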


To guarantee that the simulation error is at most $\epsilon$, we upper bound the right-hand side of \eq{randpf1error} by $\epsilon$ and solve for $r$. Assuming $\Lambda:=\max_j \norm{H_j}$ is constant, we find that it suffices to choose $r_{1}^\rand=O\bigl((tL)^{1.5}/\epsilon^{0.5}\bigr)$, giving a simulation algorithm with gate complexity $g_{1}^\rand=O\bigl(t^{1.5}L^{2.5}/\epsilon^{0.5}\bigr)$. In comparison, the gate complexity in the deterministic case is $g_{1}^{\det}=O\bigl(t^2L^3/\epsilon\bigr)$. Therefore, the randomized first-order product formula algorithm improves over the deterministic algorithm with respect to all parameters of interest.

It is natural to ask whether a similar randomization strategy can improve higher-order product formulas (as defined in \eq{pf2k}).
While it turns out that randomization does not improve the order of the formula, it does result in a significant reduction of the error, and in particular, lowers the dependence on the number of summands in the Hamiltonian. The more complicated structure of higher-order formulas makes this analysis more involved than in the first-order case (in particular, we randomly permute the $L$ summands instead of simply choosing whether or not to reverse them, so we use $\Theta(L \log L)$ bits of randomness per segment instead of only a single bit). As discussed at the end of \sec{intro}, our proof is based on a randomization lemma (established in the next section) that evaluates the dominant contribution to the Taylor series of the randomized product formula in closed form.

%%%%%%%%%%%%%%%%%%%%%%%%%%%%%%%%%%%%%%%%%%%%%%%%%%%%%%%%%%%%%%%%%%%%%%%%%%%%%%
\section{Randomization lemma}
\label{sec:rand}
%!TEX root = randsim.tex

In this section, we study the Taylor expansion of the average evolution operator obtained by randomizing how the summands of a Hamiltonian are ordered. We consider a formula of the form
\begin{equation}
\begin{aligned}
&\exp(q_1\lambda H_{\pi_1(1)})\exp(q_1\lambda H_{\pi_1(2)})\cdots\exp(q_1\lambda H_{\pi_1(L)})\\
&\exp(q_2\lambda H_{\pi_2(1)})\exp(q_2\lambda H_{\pi_2(2)})\cdots\exp(q_2\lambda H_{\pi_2(L)})\\
&\cdots\\
&\exp(q_\kappa\lambda H_{\pi_\kappa(1)})\exp(q_\kappa\lambda H_{\pi_\kappa(2)})\cdots\exp(q_\kappa\lambda H_{\pi_\kappa(L)})
\end{aligned}
\end{equation}
for real numbers $q_1,\ldots,q_\kappa\in\R$, a complex number $\lambda\in\C$, Hermitian matrices $H_1,\ldots,H_L$, and permutations $\pi_1,\ldots,\pi_\kappa\in\Sym(L)$.
By choosing appropriate values of $q_1,\ldots,q_\kappa\in\R$ and ordering $H_1,\ldots,H_L$ in both forward and backward directions, we can write any product formula $S_{2k}(\lambda)$ in this form.

We now permute the summands to get the average evolution
\begin{equation}
\label{eq:avg}
\begin{aligned}
\frac{1}{L!}\sum_{\sigma\in \Sym(L)}&\exp(q_1\lambda H_{\sigma(\pi_1(1))})\exp(q_1\lambda H_{\sigma(\pi_1(2))})\cdots\exp(q_1\lambda H_{\sigma(\pi_1(L))})\\
&\exp(q_2\lambda H_{\sigma(\pi_2(1))})\exp(q_2\lambda H_{\sigma(\pi_2(2))})\cdots\exp(q_2\lambda H_{\sigma(\pi_2(L))})\\
&\cdots\\
&\exp(q_\kappa\lambda H_{\sigma(\pi_\kappa(1))})\exp(q_\kappa\lambda H_{\sigma(\pi_\kappa(2))})\cdots\exp(q_\kappa\lambda H_{\sigma(\pi_\kappa(L))}).
\end{aligned}
\end{equation}
In its Taylor expansion, we call the sum of the form
\begin{equation}
\sum_{\substack{m_1,\ldots,m_s\\ \text{pairwise different}}}\alpha_{m_1\ldots m_s}\lambda^s H_{m_1}\cdots H_{m_s},
\end{equation}
with coefficients $\alpha_{m_1\ldots m_s}\in\C$, the $s$th-order \emph{nondegenerate term}. This term contributes $\Theta(L^s)$ to the $s$th-order error, whereas the remaining (degenerate) terms only contribute $O(L^{s-1})$.

The following lemma shows how to compute the $s$th-order nondegenerate term for an arbitrary average evolution.

\begin{lemma}[Randomization lemma]
	\label{lem:randlemma}
	Define an average evolution operator as in \eq{avg} and let $s\leq L$ be a positive integer. The $s$th-order nondegenerate term of this operator is
	\begin{equation}
	\label{eq:randlemmastate}
	\frac{[(q_1+\cdots+q_\kappa)\lambda]^s}{s!}\sum_{\substack{m_1,\ldots,m_s\\ \text{pairwise different}}}H_{m_1}\cdots H_{m_s}.
	\end{equation}
\end{lemma}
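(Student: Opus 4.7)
The plan is to expand every exponential as a Taylor series, identify which contributions can yield a nondegenerate $s$th-order term, and then evaluate the sum over $\sigma$ via a clean combinatorial count. First, for a fixed $\sigma$, I would expand each exponential as $\exp(q_i\lambda H_{\sigma(\pi_i(j))})=\sum_{a\geq 0}(q_i\lambda)^a H_{\sigma(\pi_i(j))}^a/a!$ and multiply the $\kappa L$ series. An order-$s$ contribution is indexed by exponents $a_{ij}\geq 0$ with $\sum a_{ij}=s$ and, after reading the $H$-monomial in lexicographic order of the slots $(i,j)$, takes the nondegenerate form $H_{m_1}\cdots H_{m_s}$ with all $m_\ell$ distinct iff (a) every $a_{ij}\in\{0,1\}$ and (b) the $s$ slots with $a_{ij}=1$ carry pairwise distinct values $\pi_i(j)$ (equivalent, under the bijection $\sigma$, to distinctness of the final $H$-indices). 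Hence the nondegenerate $s$th-order contribution for this $\sigma$ is
\begin{equation*}
\lambda^s \sum_{P} q(P)\, H_{\sigma(\pi_{i_1}(j_1))}\cdots H_{\sigma(\pi_{i_s}(j_s))},
\end{equation*}
where $P=\{(i_1,j_1)<\cdots<(i_s,j_s)\}$ ranges over lex-ordered $s$-subsets of $\{1,\dots,\kappa\}\times\{1,\dots,L\}$ with $\pi_{i_\ell}(j_\ell)$ pairwise distinct, and $q(P):=q_{i_1}\cdots q_{i_s}$.

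Next, I would average over $\sigma$. For any ordered $s$-tuple of distinct indices $(n_1,\dots,n_s)$, the map $\sigma\mapsto(\sigma(n_1),\dots,\sigma(n_s))$ has fiber of size $(L-s)!$ over each ordered $s$-tuple of distinct elements of $\{1,\dots,L\}$, so $\frac{1}{L!}\sum_\sigma H_{\sigma(n_1)}\cdots H_{\sigma(n_s)}=\frac{(L-s)!}{L!}\sum_{m_\bullet\text{ distinct}} H_{m_1}\cdots H_{m_s}$. Applying this term by term yields
\begin{equation*}
\frac{\lambda^s(L-s)!}{L!}\,\Bigl(\sum_P q(P)\Bigr)\sum_{\substack{m_1,\dots,m_s\\\text{pairwise distinct}}}H_{m_1}\cdots H_{m_s},
\end{equation*}
so matching with \eq{randlemmastate} reduces the lemma to the identity $\sum_P q(P)=\binom{L}{s}(q_1+\cdots+q_\kappa)^s$.

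To prove this identity, I would classify admissible $P$ according to the composition $(c_1,\dots,c_\kappa)$ of $s$, where $c_i$ counts how many elements of $P$ have first coordinate $i$. A $P$ with this composition is encoded by subsets $J_i\subseteq\{1,\dots,L\}$ of sizes $c_i$ (lex order within each block is automatic) subject to $\pi_1(J_1),\dots,\pi_\kappa(J_\kappa)$ being pairwise disjoint. The key substitution $K_i:=\pi_i(J_i)$ eliminates the permutations: admissible tuples $(J_1,\dots,J_\kappa)$ are in bijection with tuples of pairwise disjoint subsets $(K_1,\dots,K_\kappa)$ of $\{1,\dots,L\}$ of prescribed sizes, counted (using $s\le L$) by $\binom{L}{c_1,\dots,c_\kappa,L-s}$. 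The multinomial theorem then yields $\sum_{c_1+\cdots+c_\kappa=s}\binom{L}{c_1,\dots,c_\kappa,L-s}q_1^{c_1}\cdots q_\kappa^{c_\kappa}=\binom{L}{s}(q_1+\cdots+q_\kappa)^s$, completing the proof.

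The only conceptual step, and hence the main obstacle, is spotting the change of variables $K_i=\pi_i(J_i)$ that eliminates the specific permutations $\pi_i$ and makes the answer depend only on the total weight $q_1+\cdots+q_\kappa$. Everything else amounts to careful bookkeeping of the Taylor expansion and the orbit-stabilizer count for $\sigma$.
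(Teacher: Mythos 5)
Your proof is correct and follows essentially the same route as the paper's: isolate the order-$s$ contributions with all Taylor exponents in $\{0,1\}$ and pairwise distinct $\pi$-values, average over $\sigma$ using the $(L-s)!$ fiber count, and reduce everything to a multinomial identity for the $q$-weights. The only difference is presentational: where you count admissible slot-sets directly via the bijection $K_i=\pi_i(J_i)$ and the multinomial coefficient $\binom{L}{c_1,\ldots,c_\kappa,L-s}$, the paper counts ordered tuples ($L(L-1)\cdots(L-s+1)$ choices) and corrects for overcounting by $r_1!\cdots r_\kappa!$ within each row --- the same count, organized differently.
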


\begin{proof}
	We take all possible products of $s$ terms from the Taylor expansion of \eq{avg}. Observe that the exponentials in \eq{avg} are organized in an array with $\kappa$ rows and $L$ columns. We use $\kappa_1,\ldots,\kappa_s$ and $l_1,\ldots,l_s$ to label the row and column indices, respectively, of the exponentials from which the terms are chosen. To avoid double counting, we take terms with smaller row indices first (i.e., $\kappa_1\leq\cdots\leq \kappa_s$). Within each row, we take terms with smaller column indices first. To get the $s$th-order nondegenerate term, we require that $\pi_{\kappa_1}(l_1),\ldots,\pi_{\kappa_s}(l_s)$ are pairwise different. The $s$th-order nondegenerate term of \eq{avg} can then be expressed as
	\begin{align}
	\frac{1}{L!}\sum_{\sigma\in \Sym(L)}\sum_{\kappa_1\leq\cdots\leq \kappa_s}\sum_{\substack{\pi_{\kappa_1}(l_1),\ldots,\pi_{\kappa_s}(l_s)\\ \text{pairwise different}}}
	(q_{\kappa_1}\lambda H_{\sigma(\pi_{\kappa_1}(l_1))})\cdots(q_{\kappa_s}\lambda H_{\sigma(\pi_{\kappa_s}(l_s))}).
	\end{align}
	
	A direct calculation shows that
	\begin{equation}
	\begin{aligned}
	&\frac{1}{L!}\sum_{\sigma\in \Sym(L)}\ \sum_{\kappa_1\leq\cdots\leq \kappa_s}\ \sum_{\substack{\pi_{\kappa_1}(l_1),\ldots,\pi_{\kappa_s}(l_s)\\ \text{pairwise different}}}
	(q_{\kappa_1}\lambda H_{\sigma(\pi_{\kappa_1}(l_1))})\cdots(q_{\kappa_s}\lambda H_{\sigma(\pi_{\kappa_s}(l_s))}) \\
	&\quad=\frac{1}{L!}\sum_{\sigma\in \Sym(L)}\ \sum_{\kappa_1\leq\cdots\leq \kappa_s}\ \sum_{\substack{\pi_{\kappa_1}(l_1),\ldots,\pi_{\kappa_s}(l_s)\\ \text{pairwise different}}}\ \sum_{\substack{m_1=\sigma(\pi_{\kappa_1}(l_1)),\ldots,\\m_s=\sigma(\pi_{\kappa_s}(l_s))}}
	(q_{\kappa_1}\lambda H_{m_1})\cdots(q_{\kappa_s}\lambda H_{m_s}) \\
	&\quad=\frac{1}{L!}\sum_{\substack{m_1,\ldots,m_s\\ \text{pairwise different}}}\ \sum_{\kappa_1\leq\cdots\leq \kappa_s}\ \sum_{\substack{\pi_{\kappa_1}(l_1),\ldots,\pi_{\kappa_s}(l_s)\\ \text{pairwise different}}}\ 
	\sum_{\substack{\sigma\in \Sym(L):\\ \sigma(\pi_{\kappa_1}(l_1))=m_1,\ldots,\\\sigma(\pi_{\kappa_s}(l_s))=m_s}}
	(q_{\kappa_1}\lambda H_{m_1})\cdots(q_{\kappa_s}\lambda H_{m_s}) \\
	&\quad=\frac{(L-s)!}{L!}\sum_{\substack{m_1,\ldots,m_s\\ \text{pairwise different}}}\bigg[\sum_{\kappa_1\leq\cdots\leq \kappa_s}\ \sum_{\substack{\pi_{\kappa_1}(l_1),\ldots,\pi_{\kappa_s}(l_s)\\ \text{pairwise different}}}
	(q_{\kappa_1}\lambda )\cdots(q_{\kappa_s}\lambda)\bigg]H_{m_1}\cdots H_{m_s}.
	\label{eq:randlemmaproof}
	\end{aligned}
	\end{equation}
	Now observe that the summand $(q_{\kappa_1}\lambda )\cdots(q_{\kappa_s}\lambda)$ depends only on the row indices. Letting $r_1,\ldots,r_\kappa$ denote the number of terms picked from row $1,\ldots,\kappa$, respectively, we can re-express this summand as $(q_{1}\lambda)^{r_1}\cdots(q_\kappa\lambda)^{r_\kappa}$. We determine the coefficient of this term as follows. The number of ways of choosing $l_1,\ldots,l_s$ pairwise different is $L(L-1)\cdots(L-s+1)$. However, when we apply permutations $\pi_{\kappa_1},\ldots,\pi_{\kappa_s}$, we may double count some terms. In particular, if $\kappa_i=\kappa_{i+1}$, we are to pick terms from the same row $\kappa_i$ and we must have $l_i<l_{i+1}$. This implies that the ordering of $\pi_{\kappa_i}(l_i)$ and $\pi_{\kappa_{i+1}}(l_{i+1})$ is uniquely determined. Altogether, we see that we have overcounted by a factor of $(r_1!)\cdots (r_\kappa!)$. Therefore, we have
	\begin{equation}
	\label{eq:multinomial}
	\begin{aligned}
	\sum_{\kappa_1\leq\cdots\leq \kappa_s}\sum_{\substack{\pi_{\kappa_1}(l_1),\ldots,\pi_{\kappa_s}(l_s)\\ \text{pairwise different}}} \!
	(q_{\kappa_1}\lambda )\cdots(q_{\kappa_s}\lambda)
	&= \! \sum_{\substack{r_1,\ldots, r_\kappa:\\ r_1+\cdots+r_\kappa=s}} \!\!\!
	\frac{L(L-1)\cdots(L-s+1)}{(r_1!)\cdots (r_\kappa!)}(q_{1}\lambda)^{r_1}\cdots(q_\kappa\lambda)^{r_\kappa}\\
	&= L(L-1)\cdots(L-s+1)\frac{[(q_1+\cdots+q_\kappa)\lambda]^s}{s!},
	\end{aligned}
	\end{equation}
	where the last equality follows by the multinomial theorem.
	
	Substituting \eq{multinomial} into \eq{randlemmaproof} completes the proof.
\end{proof}

As an immediate corollary, we compute the $s$th-order nondegenerate term of the average evolution operator $\frac{1}{L!}\sum_{\sigma\in \Sym(L)}S_{2k}^\sigma(\lambda)$.
\begin{corollary}
	\label{cor:randpf}
	Let $\{H_j\}_{j=1}^{L}$ be Hermitian operators; let $\lambda\in\C$, $k,s\in\N$, and $s\leq L$. Then the $s$th-order nondegenerate term of the average evolution $\frac{1}{L!}\sum_{\sigma\in \Sym(L)}S_{2k}^\sigma(\lambda)$, with $S_{2k}^\sigma(\lambda)$ defined in \eq{permpf}, is
	\begin{equation}
	\label{eq:nondegpf}
	\frac{\lambda^s}{s!}\sum_{\substack{m_1,\ldots,m_s\\ \text{pairwise different}}}H_{m_1}\cdots H_{m_s}.
	\end{equation}
\end{corollary}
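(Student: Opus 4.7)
The plan is to recognize $\frac{1}{L!}\sum_{\sigma\in\Sym(L)} S_{2k}^\sigma(\lambda)$ as an instance of the average evolution \eq{avg} and then apply \lem{randlemma} directly. Concretely, I would first exhibit a decomposition
\[
S_{2k}^\sigma(\lambda) \;=\; \prod_{i=1}^{\kappa}\prod_{j=1}^{L}\exp\bigl(q_i\lambda H_{\sigma(\pi_i(j))}\bigr)
\]
for some $\kappa\in\N$, real weights $q_1,\ldots,q_\kappa$, and permutations $\pi_1,\ldots,\pi_\kappa\in\Sym(L)$ that are independent of $\sigma$. Given such a decomposition, the averaged formula is exactly an instance of \eq{avg}, and \lem{randlemma} identifies its $s$th-order nondegenerate term with \eq{randlemmastate}. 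The corollary therefore reduces to verifying (a) that such a decomposition exists, and (b) that $q_1+\cdots+q_\kappa=1$, which would collapse the prefactor in \eq{randlemmastate} to $\lambda^s/s!$ and yield \eq{nondegpf}.

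For (a), I would induct on $k$. The base case $k=1$ can be read off \eq{permpf}: take $\kappa=2$, $q_1=q_2=\tfrac{1}{2}$, $\pi_1$ the identity, and $\pi_2$ the reversal on $\{1,\ldots,L\}$. For the inductive step, the recursion in \eq{permpf} concatenates five rescaled copies of $S_{2k-2}^\sigma$; because the \emph{same} $\sigma$ is applied in every copy, it pulls out uniformly, producing the template form with $\kappa=5\kappa'$ rows whose weights are the weights of $S_{2k-2}^\sigma$ multiplied by $p_k,\,p_k,\,1-4p_k,\,p_k,\,p_k$ respectively, and with permutations inherited unchanged from the sub-formulas.

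For (b), the cleanest argument is to match linear terms in the Taylor expansion. Since each row of the decomposition contains every $H_j$ exactly once with coefficient $q_i\lambda$, the first-order term of $S_{2k}^\sigma(\lambda)$ in $\lambda$ is $\lambda\bigl(\sum_i q_i\bigr)\sum_j H_j$. On the other hand, $S_{2k}$ is at least a first-order approximation to $V(\lambda)=\exp(\lambda\sum_j H_j)$, whose first-order term is $\lambda\sum_j H_j$; comparing gives $\sum_i q_i=1$. (The same identity also follows by induction from $4p_k+(1-4p_k)=1$ together with the base case $\tfrac{1}{2}+\tfrac{1}{2}=1$.) Substituting into \eq{randlemmastate} produces \eq{nondegpf}.

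There is no substantive obstacle here—the whole argument is bookkeeping. The one subtlety worth tracking carefully is in the inductive step of (a), where I need to check that the outer $\sigma$ in \eq{permpf} factors out uniformly across every row of the fully expanded product; that is, that the recursion does not intertwine $\sigma$ with the sub-formula permutations. This is clear once one notes that \eq{permpf} applies the same $\sigma$ to all five copies, but it is the step most worth spelling out carefully so as to legitimately invoke the hypothesis of \lem{randlemma}.
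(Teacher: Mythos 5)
Your proposal is correct and follows essentially the same route as the paper: the paper asserts the template decomposition of $S_{2k}^\sigma$ in the setup of \sec{rand} and its one-line proof of the corollary is exactly your step (b), deducing $q_1+\cdots+q_\kappa=1$ from first-order accuracy and substituting into \eq{randlemmastate}. Your inductive verification of the decomposition and of $\sum_i q_i=1$ via $4p_k+(1-4p_k)=1$ just fills in details the paper leaves implicit.
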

\begin{proof}
	The fact that $S_{2k}^\sigma(\lambda)$ is at least first-order accurate implies that $q_1+\cdots+ q_\kappa=1$ in \eq{randlemmastate}.
\end{proof}
Observe that the $s$th-order nondegenerate term of $V(\lambda)=\exp\bigl(\lambda\sum_{j=1}^{L}H_j\bigr)$ is also given by \eq{nondegpf}. Therefore, the $s$th-order nondegenerate term completely cancels in
\begin{equation}
V(\lambda)-\frac{1}{L!}\sum_{\sigma\in \Sym(L)}S_{2k}^\sigma(\lambda).
\end{equation}

%%%%%%%%%%%%%%%%%%%%%%%%%%%%%%%%%%%%%%%%%%%%%%%%%%%%%%%%%%%%%%%%%%%%%%%%%%%%%%
\section{Error bounds}
\label{sec:errorbound}
%!TEX root = randsim.tex

In this section we establish our main result, an upper bound on the error of a randomized product formula simulation. To apply the mixing lemma, we need to bound the error of the average evolution. We now present an error bound for an arbitrary fixed-order term in the Taylor expansion of the average evolution operator.

\begin{lemma}
	\label{lem:bterm}
	Let $\{H_j\}_{j=1}^{L}$ be Hermitian operators; let $\lambda\in\C$ and $k,s\in\N$. Define the target evolution $V(\lambda)$ as in \eq{idealexp}, and define the permuted $(2k)$th-order formula $S_{2k}^\sigma(\lambda)$ as in \eq{permpf}. Then the $s$th-order error of the approximation
	\begin{equation}
	\label{eq:rand_fix}
	V(\lambda)-\frac{1}{L!}\sum_{\sigma\in \Sym(L)}S_{2k}^\sigma(\lambda)
	\end{equation}
	is at most
	\begin{equation}
	\begin{cases}
	0 & 0\leq s\leq 2k,\\
	\frac{(2\cdot 5^{k-1}\Lambda|\lambda|)^s}{(s-2)!}L^{s-1} & s>2k,
	\end{cases}
	\end{equation}
	where $\Lambda:=\max\norm{H_j}$.
\end{lemma}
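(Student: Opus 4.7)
The plan is to split into the two ranges $0\le s\le 2k$ and $s>2k$.  The case $s\le 2k$ is immediate: each individual permuted formula $S_{2k}^\sigma(\lambda)$ is itself a valid $(2k)$-th order product formula for the Hamiltonian $\sum_j H_{\sigma(j)}=\sum_j H_j$, since relabeling the summands does not change the structural validity of the Suzuki recursion.  Hence $V(\lambda)-S_{2k}^\sigma(\lambda)$ has vanishing coefficients at orders $0,1,\ldots,2k$ in $\lambda$ for every $\sigma$, and the same holds after averaging.

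For $s>2k$ (first assuming $s\le L$), I would decompose the order-$s$ coefficient of each operator into a \emph{nondegenerate} part (all $s$ label indices pairwise distinct) and a \emph{degenerate} part (at least two coincide).  By \cor{randpf}, the order-$s$ nondegenerate contributions to $V(\lambda)$ and to the average $\frac{1}{L!}\sum_\sigma S_{2k}^\sigma(\lambda)$ agree exactly, so the error reduces to a difference of degenerate parts, which the triangle inequality lets me bound term by term.  For $V(\lambda)$ the bound is routine: the coefficient $\frac{1}{s!}\bigl(\sum_j H_j\bigr)^s$ has at most $L^s - L!/(L-s)! \le \binom{s}{2}L^{s-1}$ summands with a repeated index, giving a spectral-norm bound of $\frac{(\Lambda|\lambda|)^s L^{s-1}}{2(s-2)!}$.

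For the more delicate bound on the degenerate part of $S_{2k}^\sigma(\lambda)$, I would write $S_{2k}^\sigma(\lambda)=\prod_{i=1}^{M}\exp(c_i\lambda H_{\tau(i)})$, where $M=2\cdot 5^{k-1}L$ counts all exponentials across the $2\cdot 5^{k-1}$ stages of the Suzuki recursion, the scalar coefficients satisfy $|c_i|\le 1$ by induction on $k$ (using $|p_k|,|1-4p_k|\le 1$), and each label in $\{1,\ldots,L\}$ appears as $\tau(i)$ for exactly $2\cdot 5^{k-1}$ values of $i$.  Taylor-expanding every factor, the order-$s$ term is indexed by nonnegative multi-indices $(k_1,\ldots,k_M)$ with $\sum_i k_i=s$, and the resulting $H$-word is degenerate iff either (Case~1) some $k_i\ge 2$, or (Case~2) all $k_i\le 1$ but two selected positions share a $\tau$-label.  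In Case~1, pulling out the repeated position extracts a factor $|c_i|^2/2$ and leaves an $(s-2)$-fold multinomial bounded by $\frac{(\sum_i|c_i|)^{s-2}}{(s-2)!}$; summing over the extracted position and using $\sum_i|c_i|^2\le\sum_i|c_i|\le 2\cdot 5^{k-1}L$ yields $O\bigl(\frac{(2\cdot 5^{k-1}L)^{s-1}}{(s-2)!}\bigr)$.  In Case~2, extracting a label-sharing pair produces a sum over the $L$ labels and $\binom{2\cdot 5^{k-1}}{2}$ position pairs per label, times the same $(s-2)$-fold multinomial, giving the dominant contribution $O\bigl(\frac{(2\cdot 5^{k-1})^s L^{s-1}}{(s-2)!}\bigr)$.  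Combining all three pieces, multiplying by $\Lambda^s|\lambda|^s$, and absorbing absolute constants into the $(2\cdot 5^{k-1})^s$ factor yields the stated bound.  The edge case $s>L$, where the nondegenerate contributions are vacuous so \cor{randpf} does not apply, is handled by bounding the full order-$s$ terms of $V(\lambda)$ and $S_{2k}^\sigma(\lambda)$ directly; a quick check that $L/[s(s-1)]<1$ in this regime shows the claimed inequality still holds with room to spare.

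The main obstacle I expect is Case~2: one must simultaneously track the position-index structure of the Suzuki recursion and the label-sharing combinatorics across the $M$ exponentials.  The key quantitative fact that makes the constant come out right is that each of the $L$ labels appears at exactly $2\cdot 5^{k-1}$ positions — so label multiplicities are controlled by the total exponential count $M=2\cdot 5^{k-1}L$ rather than by the number of stages alone, and it is precisely this that yields the $(2\cdot 5^{k-1})^s$ prefactor in the final bound.
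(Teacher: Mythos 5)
Your proof is correct, and its skeleton coincides with the paper's: orders $s\le 2k$ vanish because each $S_{2k}^\sigma$ is itself an order-$2k$ formula; for $2k<s\le L$ the nondegenerate terms cancel by \cor{randpf} and only the degenerate remainders need bounding; and $s>L$ is handled by crude bounds on the full $s$th-order terms together with $s(s-1)\ge(L+1)L\ge 2L$. The one place you genuinely diverge is the bound on the degenerate part of the averaged formula. The paper delegates this to \lem{degbound}, which bounds the degenerate part as the difference between an upper bound $(\kappa L\Lambda|\lambda|)^s/s!$ on the entire $s$th-order term and the nondegenerate contribution $(\kappa\Lambda|\lambda|)^s L(L-1)\cdots(L-s+1)/s!$, and then converts $L^s-L(L-1)\cdots(L-s+1)\le\binom{s}{2}L^{s-1}$ by a union bound over pairs of colliding slots in $[L]^s$. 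You instead run the union bound directly over the $M=2\cdot 5^{k-1}L$ exponentials, splitting degenerate Taylor multi-indices into ``some $k_i\ge 2$'' and ``two positions share a label.'' Your accounting checks out exactly: the first case gives $M^{s-1}(\Lambda|\lambda|)^s/(2(s-2)!)$, the second gives $L\binom{2\cdot5^{k-1}}{2}M^{s-2}(\Lambda|\lambda|)^s/(s-2)!$, and adding the $V(\lambda)$ contribution $(\Lambda|\lambda|)^sL^{s-1}/(2(s-2)!)$ yields a total of $\tfrac{(\Lambda|\lambda|)^sL^{s-1}}{2(s-2)!}\bigl[1+(2\cdot5^{k-1})^{s-1}+(2\cdot5^{k-1})^{s}\bigr]\le(2\cdot5^{k-1}\Lambda|\lambda|)^sL^{s-1}/(s-2)!$, so no extra constants actually need absorbing. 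The two bookkeepings are equivalent in strength; yours avoids stating \lem{degbound} as a separate lemma at the cost of tracking the position--label structure of the Suzuki recursion explicitly, while the paper's reduces everything to counting label tuples.
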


The proof of this error bound uses the following estimate of a fixed-order degenerate term in the average evolution operator.

\begin{lemma}
	\label{lem:degbound}
	Let $\{H_j\}_{j=1}^{L}$ be Hermitian operators with $\Lambda := \max_{j}\norm{H_{j}}$; let $q_1,\ldots,q_\kappa \in \R$ with $\max_k |q_k| \le 1$; and let $s \le L$ be a positive integer.
	Then the norm of the $s$th-order degenerate term of the ideal evolution operator $V(\lambda)$ as in \eq{idealexp} is at most
	\begin{equation}
	\label{eq:degideal}
		\frac{(\Lambda|\lambda|)^s}{s!}\big[L^s-L(L-1)\cdots(L-s+1)\big]
	\end{equation}
	and the norm of the $s$th-order degenerate term of the average evolution operator as in \eq{avg} is at most
	\begin{equation}
	\label{eq:degavg}
		\frac{(\kappa\Lambda|\lambda|)^s}{s!}\big[L^s-L(L-1)\cdots(L-s+1)\big].
	\end{equation}
\end{lemma}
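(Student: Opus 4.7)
The plan is to prove both bounds by a direct Taylor expansion and a careful count of the summands that appear at order $s$. Part (a) serves as a warm-up for the more intricate part (b).

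For \eq{degideal}, expand $V(\lambda)=\sum_{s\ge 0}\frac{\lambda^s}{s!}\bigl(\sum_{j=1}^{L}H_j\bigr)^s$ and distribute: the $s$-th power yields $L^s$ ordered monomials $H_{m_1}\cdots H_{m_s}$, each of spectral norm at most $\Lambda^s$. Exactly $L(L-1)\cdots(L-s+1)$ of these tuples are nondegenerate, so the $s$-th-order degenerate term is a sum of $L^s-L(L-1)\cdots(L-s+1)$ such products, and the triangle inequality yields the claim.

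For \eq{degavg}, I would expand each of the $\kappa L$ factors in \eq{avg} via its Taylor series. For fixed $\sigma\in\Sym(L)$, the $s$-th-order term is indexed by non-negative integer vectors $(n_{i,j})_{i\in[\kappa],\,j\in[L]}$ with $\sum_{i,j}n_{i,j}=s$, each contributing
\[
\prod_{(i,j)\text{ in lex order}}\frac{(q_i\lambda H_{\sigma(\pi_i(j))})^{n_{i,j}}}{n_{i,j}!}.
\]
Using $|q_i|\le 1$ and $\norm{H_j}\le\Lambda$, the spectral norm of each contribution is at most $\frac{|\lambda|^s\Lambda^s}{\prod_{i,j}n_{i,j}!}$, a $\sigma$-independent bound that is unaffected by the $\frac{1}{L!}\sum_\sigma$ average.

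The key combinatorial step is to count the nondegenerate configurations. Because $\sigma$ is a bijection, a configuration is nondegenerate iff (i) $n_{i,j}\in\{0,1\}$ for all $(i,j)$ and (ii) the selected positions have pairwise distinct values $\pi_i(j)$; in particular, the nondegenerate/degenerate dichotomy depends only on $(n_{i,j})$. Because each $\pi_i$ is a permutation of $\{1,\ldots,L\}$, every value in $\{1,\ldots,L\}$ occurs at exactly $\kappa$ positions of the $\kappa\times L$ array (once per row), so the number of nondegenerate configurations equals $\binom{L}{s}\kappa^s$, each contributing weight $\tfrac{1}{\prod n_{i,j}!}=1$. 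The multinomial theorem supplies the total weight $\sum_{(n_{i,j}):\sum=s}\tfrac{1}{\prod_{i,j}n_{i,j}!}=\tfrac{(\kappa L)^s}{s!}$; subtracting leaves the weight of the degenerate configurations as $\tfrac{\kappa^s[L^s-L(L-1)\cdots(L-s+1)]}{s!}$. Multiplying by the per-configuration norm bound $|\lambda|^s\Lambda^s$ and noting $\kappa^s\Lambda^s|\lambda|^s=(\kappa\Lambda|\lambda|)^s$ produces \eq{degavg}.

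The main obstacle is the combinatorial bookkeeping: correctly identifying the nondegeneracy condition as a property of $(n_{i,j})$ alone (via the $\pi$-values) rather than depending on $\sigma$, and counting the nondegenerate configurations as $\binom{L}{s}\kappa^s$ by exploiting that each $\pi_i$ is a bijection so that each target value has multiplicity exactly $\kappa$ in the array.
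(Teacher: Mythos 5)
Your proof is correct, and it follows the same underlying strategy as the paper's: majorize the $s$th-order term by the commutative polynomial obtained by replacing each $H_j$ with $\Lambda$, each $q_i$ with $1$, and $\lambda$ with $|\lambda|$, and then bound the degenerate weight as (total weight) minus (nondegenerate weight). The difference is in how the two weights are computed. The paper obtains the nondegenerate weight by invoking the randomization lemma (\lem{randlemma}) for the closed form of the nondegenerate term and then bounding $|q_1+\cdots+q_\kappa|\le\kappa$, and obtains the total weight by extracting the $s$th-order coefficient of $\exp(\kappa L\Lambda|\lambda|)$ (citing the majorization identity from \cite{CMNRS18}). You instead count configurations directly: the observation that nondegeneracy of a configuration $(n_{i,j})$ is independent of $\sigma$ (since $\sigma$ is a bijection), the count $\binom{L}{s}\kappa^s$ of nondegenerate configurations each of weight $1$, and the multinomial identity $\sum_{\sum n_{i,j}=s}\prod(n_{i,j}!)^{-1}=(\kappa L)^s/s!$ for the total. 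These yield exactly the same two quantities and hence the same bound. What your route buys is self-containment: it does not depend on \lem{randlemma} or on the external majorization argument, and it sidesteps the permutation average entirely since the per-configuration bound is $\sigma$-independent. What the paper's route buys is reuse of machinery it needs anyway for \cor{randpf} and \lem{bterm}. Both are sound; no gap.
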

\begin{proof}
	The $s$th-order term of $V(\lambda)$ is
	\begin{equation}
		\frac{\big(\lambda\sum_{j=1}^{L}H_j\big)^s}{s!}=\frac{\lambda^s}{s!}\sum_{\substack{m_1,\ldots,m_s}}H_{m_1}\cdots H_{m_s}
	\end{equation}
	and its nondegenerate term is
	\begin{equation}
		\frac{\lambda^s}{s!}\sum_{\substack{m_1,\ldots,m_s\\ \text{pairwise different}}}H_{m_1}\cdots H_{m_s}.
	\end{equation}
	We use the following strategy to bound the norms of these terms: (i) bound the norm of a sum of terms by summing the norms of each term; (ii) bound the norm of a product of terms by multiplying the norms of each term; (iii) bound the norm of each summand by $\Lambda$; and (iv) replace $\lambda$ by $|\lambda|$. Applying this strategy, we find that the norm of the $s$th-order term is at most $(L\Lambda|\lambda|)^s/s!$, where the nondegenerate term contributes precisely $L(L-1)\cdots(L-s+1)(\Lambda|\lambda|)^s/s!$. Taking the difference gives the desired bound \eq{degideal}.

  According to \lem{randlemma}, the $s$th-order nondegenerate term of the average evolution is
	\begin{equation}
		\frac{[(q_1+\cdots+q_\kappa)\lambda]^s}{s!}\sum_{\substack{m_1,\ldots,m_s\\ \text{pairwise different}}}H_{m_1}\cdots H_{m_s}.
	\end{equation}
	Following the same strategy as for $V(\lambda)$ and also upper bounding the norm of each $q_k$ by $1$ as part of step (iv), we find that the norm of this term is at most
	\begin{equation}
		\frac{(\kappa\Lambda|\lambda|)^s}{s!}L(L-1)\cdots(L-s+1).
		\label{eq:nondegpart}
	\end{equation}

	It remains to find an upper bound for the entire $s$th-order term of the average evolution. To this end, we start with the average evolution \eq{avg} and apply the following strategy: (i${}'$) replace each summand of the Hamiltonian by $\Lambda$; (ii${}'$) replace each $q_k$ by $1$ and each $\lambda$ by $|\lambda|$; and (iii${}'$) expand all exponentials into their Taylor series and extract the $s$th-order term. In other words, we extract the $s$th-order term of $\sum_{\sigma\in \Sym(L)}\exp(\kappa L \Lambda |\lambda|)/L!$ to get
	\begin{equation}
		\frac{(\kappa L\Lambda|\lambda|)^s}{s!}.
		\label{eq:entire}
	\end{equation}
	The equivalence of strategies (i)--(iv) and (i${}'$)--(iii${}'$) can be seen from \cite[Eq.~(57)]{CMNRS18}. Finally, taking the difference between \eq{entire} and \eq{nondegpart} gives the desired bound \eq{degavg}.
\end{proof}

\begin{proof}[Proof of \lem{bterm}]
	We first prove a stronger bound, namely that the $s$th-order error is at most
	\begin{equation}
	\label{eq:stronger}
	\begin{cases}
	0 & 0\leq s\leq 2k,\\
	2\frac{(2\cdot 5^{k-1}\Lambda|\lambda|)^s}{s!}[L^s-L(L-1)\cdots(L-s+1)] & 2k< s\leq L,\\
	2\frac{(2\cdot 5^{k-1}\Lambda|\lambda|)^s}{s!}L^s & s>L.
	\end{cases}
	\end{equation}
	The first and third cases in this expression are straightforward. The formula $S_{2k}^\sigma$ is exact for terms with order $0\leq s\leq 2k$ (this is what it means for the formula to have order $2k$), so the error is zero in this case. When $s>L$, the randomization lemma is not applicable and the error can be bounded as in \cite[Proof of Proposition F.3]{CMNRS18}.

	To handle the remaining case $2k< s\leq L$, we apply \lem{degbound} with $\kappa=2\cdot 5^{k-1}$. This choice of $\kappa$ follows from the definition of the ($2k$)th-order formula \eq{pf2k}. The norm of the $s$th-order degenerate terms can be upper bounded by
	\begin{equation}
		\frac{(\Lambda|\lambda|)^s}{s!}\big[L^s-L(L-1)\cdots(L-s+1)\big]+\frac{(2\cdot 5^{k-1}\Lambda|\lambda|)^s}{s!}\big[L^s-L(L-1)\cdots(L-s+1)\big].
	\end{equation}
	According to \cor{randpf}, the $s$th-order nondegenerate term of \eq{rand_fix} cancels, which proves \eq{stronger} for $2k< s\leq L$.

	To finish the proof, we need a unified error expression for order $s> 2k$. When $2k< s\leq L$, we have
	\begin{equation}
	\begin{aligned}
	&L^s-L(L-1)\cdots(L-s+1)\\
	&\quad=\#\big\{(l_1,\ldots,l_s)\in[L]^s\big\}-\#\big\{(l_1,\ldots,l_s)\in[L]^s: \forall i, j,\ l_i\neq l_j\big\}\\
	&\quad=\#\big\{(l_1,\ldots,l_s)\in[L]^s\big\}-\#\bigcap_{i< j}\big\{(l_1,\ldots,l_s)\in[L]^s:\ l_i\neq l_j\big\}\\
	&\quad=\#\bigcup_{i< j}\big\{(l_1,\ldots,l_s)\in[L]^s:\ l_i= l_j\big\}\\
	&\quad\leq\binom{s}{2}L^{s-1},
	\end{aligned}
	\end{equation}
	with $\#\{\cdot\}$ denoting the size of a set and $[L]:=\{1,\ldots,L\}$, where the inequality follows from the union bound. Therefore, we have
	\begin{equation}
		\begin{aligned}
			2\frac{(2\cdot 5^{k-1}\Lambda|\lambda|)^s}{s!}[L^s-L(L-1)\cdots(L-s+1)]
			&\leq \frac{(2\cdot 5^{k-1}\Lambda|\lambda|)^s}{s!} s(s-1)L^{s-1} \\
			&=\frac{(2\cdot 5^{k-1}\Lambda|\lambda|)^s }{(s-2)!}L^{s-1}.
		\end{aligned}
	\end{equation}

	If $s>L\in\N$, we have $s(s-1)\geq (L+1)L\geq 2L$ and
	\begin{equation}
	2\frac{(2\cdot 5^{k-1}\Lambda|\lambda|)^s}{s!}L^s
	\leq \frac{(2\cdot 5^{k-1}\Lambda|\lambda|)^s }{(s-2)!}L^{s-1}.
	\end{equation}
	This completes the proof.
\end{proof}

%!TEX root = randsim.tex

We also use the following standard tail bound on the exponential function \cite[Lemma F.2]{CMNRS18}.

\begin{lemma}
	\label{lem:tail}
	For any $x\in\C$ and $\kappa\in\N$, we have
	\begin{equation}
	\bigg|\sum_{s=\kappa}^{\infty}\frac{x^s}{s!}\bigg|\leq\frac{|x|^{\kappa}}{\kappa!}\exp(|x|).
	\end{equation}
\end{lemma}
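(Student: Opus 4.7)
The plan is to reduce the statement to a term-by-term comparison with the ordinary Taylor series of $\exp(|x|)$. First I would apply the triangle inequality to bring the absolute value inside the sum, obtaining
\begin{equation}
\bigg|\sum_{s=\kappa}^{\infty}\frac{x^s}{s!}\bigg|\leq\sum_{s=\kappa}^{\infty}\frac{|x|^s}{s!},
\end{equation}
which uses only that $|x^s|=|x|^s$ and that $s!$ is positive. Because the right-hand side is a nonnegative real series, all subsequent manipulations are elementary.

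Next I would factor out the leading $|x|^{\kappa}/\kappa!$ and reindex with $m = s-\kappa$, giving
\begin{equation}
\sum_{s=\kappa}^{\infty}\frac{|x|^s}{s!}=\frac{|x|^{\kappa}}{\kappa!}\sum_{m=0}^{\infty}\frac{\kappa!}{(m+\kappa)!}|x|^m.
\end{equation}
The key inequality is then $\kappa!/(m+\kappa)!\le 1/m!$, which follows from $\binom{m+\kappa}{\kappa}\ge 1$ (equivalently, $(m+\kappa)!\ge m!\,\kappa!$). Substituting this termwise yields
\begin{equation}
\frac{|x|^{\kappa}}{\kappa!}\sum_{m=0}^{\infty}\frac{\kappa!}{(m+\kappa)!}|x|^m\le \frac{|x|^{\kappa}}{\kappa!}\sum_{m=0}^{\infty}\frac{|x|^m}{m!}=\frac{|x|^{\kappa}}{\kappa!}\exp(|x|),
\end{equation}
which is the claim.

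There is essentially no main obstacle here; the proof is a two-line reduction. The only thing to be careful about is that the bound $\kappa!/(m+\kappa)! \le 1/m!$ holds with equality at $m=0$ and is strict for $m\ge 1$ whenever $\kappa\ge 1$, so the stated bound is tight to leading order in $|x|$, which is the reason it is useful in the subsequent error analysis.
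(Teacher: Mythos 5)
Your proof is correct: the triangle inequality, the reindexing $m=s-\kappa$, and the factorial inequality $(m+\kappa)!\geq m!\,\kappa!$ together give exactly the claimed bound. The paper does not prove this lemma itself but cites it as a standard tail bound from \cite[Lemma F.2]{CMNRS18}, and your argument is the standard one underlying that reference, so there is nothing further to reconcile.
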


We now establish the main theorem, which upper bounds the error of a higher-order randomized product formula.

\begin{theorem}[Randomized higher-order error bound]
	\label{thm:main}
	Let $\{H_j\}_{j=1}^{L}$ be Hermitian matrices. Let
	\begin{equation}
	V(-it):=\exp\biggl(-it\sum_{j=1}^{L}H_j\biggr)
	\end{equation}
	be the evolution induced by the Hamiltonian $H=\sum_{j=1}^{L}H_j$ for time $t$. For any permutation $\sigma\in\Sym(L)$, define the permuted $(2k)$th-order formula recursively by
	\begin{equation}
	\label{eq:permpf2}
	\begin{aligned}
	S_2^\sigma(\lambda)&:=\prod_{j=1}^{L}\exp\bigg(\frac{\lambda}{2}H_{\sigma(j)}\bigg)\prod_{j=L}^{1}\exp\bigg(\frac{\lambda}{2}H_{\sigma(j)}\bigg)\\
	S_{2k}^\sigma(\lambda)&:=[S_{2k-2}^\sigma(p_{k}\lambda)]^{2}S_{2k-2}^\sigma((1-4p_{k})\lambda)[S_{2k-2}^\sigma(p_{k}\lambda)]^{2},
	\end{aligned}
	\end{equation}
	with $p_{k}:=1/(4-4^{1/(2k-1)})$ for $k>1$. Let $r\in\N$ and $\Lambda:=\max\norm{H_j}$. Then
	\begin{equation}
	\begin{aligned}
	&\Biggl\|\mathcal{V}(-it)-\biggl(\frac{1}{L!}\sum_{\sigma\in\Sym(L)}\mathcal{S}_{2k}^\sigma(-it/r)\biggr)^r\Biggr\|_\diamond\\
	&\quad \leq 4\frac{(2\cdot 5^{k-1}\Lambda |t| L)^{4k+2}}{\bigl((2k+1)!\bigr)^2 r^{4k+1}}\exp\biggl(4\cdot 5^{k-1}\frac{\Lambda |t| L}{r}\biggr)+
	2\frac{(2\cdot 5^{k-1}\Lambda |t|)^{2k+1}L^{2k}}{(2k-1)! r^{2k}}\exp\biggl(2\cdot 5^{k-1}\frac{\Lambda |t| L}{r}\biggr)
	\end{aligned}
	\end{equation}
	where, for $\lambda=-it$, we associate quantum channels $\mathcal{V}(\lambda)$ and $\mathcal{S}_{2k}^\sigma(\lambda)$ with the unitaries $V(\lambda)$ and $S_{2k}^\sigma(\lambda)$, respectively.
\end{theorem}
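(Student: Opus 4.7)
The plan is to apply the mixing lemma (\lem{mix}) segment by segment and combine the per-segment errors via subadditivity of the diamond norm under composition. Writing $\lambda := -it/r$, I need spectral-norm bounds
\[
\norm{S_{2k}^\sigma(\lambda) - V(\lambda)} \leq a \text{ for every } \sigma, \qquad \norm{\frac{1}{L!}\sum_{\sigma\in\Sym(L)} S_{2k}^\sigma(\lambda) - V(\lambda)} \leq b,
\]
so that \lem{mix} gives a per-segment diamond-norm error at most $a^2 + 2b$ and, after composition across the $r$ segments, a total error at most $r(a^2 + 2b)$.

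For the individual bound $a$, I would apply the standard argument used in the proof of \lem{degbound}: the $s$th-order term of $S_{2k}^\sigma(\lambda) - V(\lambda)$ vanishes for $s \leq 2k$ by construction of the $(2k)$th-order formula, and for $s > 2k$ each of the two Taylor pieces is bounded separately by replacing every Hamiltonian summand by $\Lambda$ and every scalar coefficient by unity. Since $S_{2k}^\sigma$ consists of $2\cdot 5^{k-1}$ ordered stages of $L$ exponentials, this procedure bounds the $s$th-order error norm by $2(2\cdot 5^{k-1}\Lambda|\lambda|L)^s/s!$. Summing over $s \geq 2k+1$ and applying the tail bound \lem{tail} yields
\[
a = 2 \frac{(2\cdot 5^{k-1}\Lambda|\lambda| L)^{2k+1}}{(2k+1)!}\exp\bigl(2\cdot 5^{k-1}\Lambda|\lambda| L\bigr).
\]

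For the average bound $b$, most of the technical work is already done by \lem{bterm}: the $s$th-order error of the average vanishes for $s \leq 2k$ (using the nondegenerate-term cancellation from \cor{randpf}), and for $s > 2k$ it is at most $(2\cdot 5^{k-1}\Lambda|\lambda|)^s L^{s-1}/(s-2)!$. Pulling out a factor of $(2\cdot 5^{k-1}\Lambda|\lambda|)^2/L$ and shifting the summation index by two lets me apply \lem{tail} to the tail starting at $s' = 2k-1$, producing
\[
b = \frac{(2\cdot 5^{k-1}\Lambda|\lambda|)^{2k+1} L^{2k}}{(2k-1)!}\exp\bigl(2\cdot 5^{k-1}\Lambda|\lambda| L\bigr).
\]
Substituting $|\lambda| = |t|/r$ into $r(a^2+2b)$ then reproduces the two claimed terms exactly, with $r\cdot a^2$ giving the first (subdominant) term and $r\cdot 2b$ giving the second (dominant) term.

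I do not expect any deep obstacle: the two hard ingredients---namely, the cancellation of all nondegenerate terms through order $L$ and the resulting $(s-2)!$ denominator, which together make the $L$-scaling of $b$ strictly better than that of $a$---have already been established in \lem{randlemma}, \cor{randpf}, and \lem{bterm}. The one point to be careful about is the index shift in the sum defining $b$: it is precisely this shift that trades a factor of $L$ for a better factorial denominator, and ultimately it is what produces the $L^{2k}$ (rather than $L^{2k+1}$) scaling responsible for the improved $L$-dependence of the randomized algorithm relative to its deterministic counterpart.
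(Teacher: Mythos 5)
Your proposal is correct and follows essentially the same route as the paper's proof: it obtains $a$ from the standard per-permutation bound (the $s$th-order error at most $2(2\cdot 5^{k-1}\Lambda|\lambda|L)^s/s!$ for $s>2k$ plus \lem{tail}), obtains $b$ from \lem{bterm} via the same index shift and \lem{tail}, and combines them through \lem{mix} and subadditivity of the diamond norm to get $r(a^2+2b)$, which reproduces both terms of the stated bound. The only nitpick is a typo in the prose: the factor pulled out before applying \lem{tail} to the sum defining $b$ is $(2\cdot 5^{k-1}\Lambda|\lambda|)^2 L$ rather than $(2\cdot 5^{k-1}\Lambda|\lambda|)^2/L$, but your final expression for $b$ is the correct one.
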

\begin{proof}
	We first prove that
	\begin{equation}
	\label{eq:oneseg}
	\begin{aligned}
	&\norm{\mathcal{V}(\lambda)-\frac{1}{L!}\sum_{\sigma\in\Sym(L)}\mathcal{S}_{2k}^\sigma(\lambda)}_\diamond\\
	&\quad \leq 4\frac{(2\cdot 5^{k-1}\Lambda |\lambda|L)^{4k+2}}{\bigl((2k+1)!\bigr)^2}\exp\bigl(4\cdot 5^{k-1}\Lambda |\lambda|L\bigr)+
	2\frac{(2\cdot 5^{k-1}\Lambda |\lambda|)^{2k+1}L^{2k}}{(2k-1)!}\exp\bigl(2\cdot 5^{k-1}\Lambda |\lambda|L\bigr).
	\end{aligned}
	\end{equation}
	To this end, note that the $s$th-order error of $V(\lambda)-S_{2k}^\sigma(\lambda)$ is at most
	\begin{equation}
	\begin{cases}
	0 & 0\leq s\leq 2k,\\
	\frac{2(2\cdot 5^{k-1}\Lambda|\lambda|)^s}{s!}L^{s} & s>2k
	\end{cases}
	\end{equation}
	(as before, this follows as in \cite[Proof of Proposition F.3]{CMNRS18}).
	Thus \lem{tail} gives
	\begin{equation}
	\norm{V(\lambda)-S_{2k}^\sigma(\lambda)}\leq
	2\frac{(2\cdot 5^{k-1}\Lambda |\lambda|L)^{2k+1}}{(2k+1)!}\exp\bigl(2\cdot 5^{k-1}\Lambda |\lambda|L\bigr).
	\end{equation}
	On the other hand, \lem{bterm} implies that the $s$th-order error of $V(\lambda)-\frac{1}{L!}\sum_{\sigma\in \Sym(L)}S_{2k}^\sigma(\lambda)$ is at most
	\begin{equation}
	\begin{cases}
	0 & 0\leq s\leq 2k,\\
	\frac{(2\cdot 5^{k-1}\Lambda|\lambda|)^s}{(s-2)!}L^{s-1} & s>2k,
	\end{cases}
	\end{equation}
	so again \lem{tail} gives
	\begin{equation}
	\norm{V(\lambda)-\frac{1}{L!}\sum_{\sigma\in \Sym(L)}S_{2k}^\sigma(\lambda)}\leq
	\frac{(2\cdot 5^{k-1}\Lambda |\lambda|)^{2k+1}L^{2k}}{(2k-1)!}\exp\bigl(2\cdot 5^{k-1}\Lambda |\lambda|L\bigr).
	\end{equation}
	Equation~\eq{oneseg} now follows from \lem{mix} by setting
	\begin{equation}
	\begin{aligned}
	a&=2\frac{(2\cdot 5^{k-1}\Lambda |\lambda|L)^{2k+1}}{(2k+1)!}\exp\bigl(2\cdot 5^{k-1}\Lambda |\lambda|L\bigr),\\
	b&=\frac{(2\cdot 5^{k-1}\Lambda |\lambda|)^{2k+1}L^{2k}}{(2k-1)!}\exp\bigl(2\cdot 5^{k-1}\Lambda |\lambda|L\bigr).
	\end{aligned}
	\end{equation}
	
	To simulate the evolution for time $t$, we divide it into $r$ segments. The error within each segment is obtained from \eq{oneseg} by setting $\lambda=-it/r$. Then subadditivity of the diamond norm distance gives
	\begin{equation}
	\norm{\mathcal{V}(-it)-\bigg(\frac{1}{L!}\sum_{\sigma\in\Sym(L)}\mathcal{S}_{2k}^\sigma\big(-it/r\big)\bigg)^r}_\diamond
	\leq r\norm{\mathcal{V}(-it/r)-\frac{1}{L!}\sum_{\sigma\in\Sym(L)}\mathcal{S}_{2k}^\sigma\big(-it/r\big)}_\diamond,
	\end{equation}
	which completes the proof.
\end{proof}

%%%%%%%%%%%%%%%%%%%%%%%%%%%%%%%%%%%%%%%%%%%%%%%%%%%%%%%%%%%%%%%%%%%%%%%%%%%%%%
\section{Algorithm performance and comparisons}
\label{sec:performance}
%!TEX root = randsim.tex

We now analyze the complexity of our randomized product formula algorithm. Assume that $k\in\N$ is fixed, $\Lambda=O(1)$ is constant, and $r>tL$. By \thm{main}, the asymptotic error of the $(2k)$th-order randomized product formula is
\begin{equation}
\norm{\mathcal{V}(-it)-\bigg(\frac{1}{L!}\sum_{\sigma\in\Sym(L)}\mathcal{S}_{2k}^\sigma\big(-it/r\big)\bigg)^r}_\diamond
\leq O\bigg(\frac{(tL)^{4k+2}}{r^{4k+1}}+\frac{t^{2k+1}L^{2k}}{r^{2k}}\bigg).
\label{eq:randpferror}
\end{equation}
To guarantee that the simulation error is at most $\epsilon$, we upper bound the right-hand side of \eq{randpferror} by $\epsilon$ and solve for $r$. We find that it suffices to use
\begin{equation}
\begin{aligned}
r_{2k}^\rand
&=\max\bigg\{O\bigg(\frac{(tL)^{\frac{4k+2}{4k+1}}}{\epsilon^{\frac{1}{4k+1}}}\bigg),O\bigg(\frac{t^{\frac{2k+1}{2k}}L}{\epsilon^{\frac{1}{2k}}}\bigg)\bigg\}\\
&=\max\bigg\{O\bigg(tL\bigg(\frac{tL}{\epsilon}\bigg)^{\frac{1}{4k+1}}\bigg),O\bigg(tL\bigg(\frac{t}{\epsilon}\bigg)^{\frac{1}{2k}}\bigg)\bigg\}
\end{aligned}
\end{equation}
segments, giving a simulation algorithm with
\begin{equation}
g_{2k}^\rand
=O(L r_{2k}^\rand)
=\max\bigg\{O\bigg(tL^2\bigg(\frac{tL}{\epsilon}\bigg)^{\frac{1}{4k+1}}\bigg),
O\bigg(tL^2\bigg(\frac{t}{\epsilon}\bigg)^{\frac{1}{2k}}\bigg)\bigg\}
\label{eq:randpfgates}
\end{equation}
elementary gates.

For comparison, the error in the ($2k$)th-order deterministic formula algorithm is at most \cite[Proposition F.4]{CMNRS18}
\begin{equation}
\norm{V(-it)-\bigl[S_{2k}(-it/r)\bigr]^r}\leq
O\biggl(\frac{(tL)^{2k+1}}{r^{2k}}\biggr).
\end{equation}
While this bound quantifies the simulation error in terms of the spectral-norm distance, it can easily be adapted to the diamond-norm distance using either \lem{mix} or \cite[Lemma 7]{BCK15}. This translation introduces only constant-factor overhead, so we have
\begin{equation}
\norm{\mathcal{V}(-it)-\bigl[\mathcal{S}_{2k}(-it/r)\bigr]^r}_\diamond\leq
O\biggl(\frac{(tL)^{2k+1}}{r^{2k}}\biggr).
\end{equation}
Therefore, the number of segments that suffice to ensure error at most $\epsilon$ satisfies
\begin{equation}
r_{2k}^{\det} = O\biggl(tL\biggl(\frac{tL}{\epsilon}\biggr)^{\frac{1}{2k}}\biggr),
\end{equation}
giving an algorithm with
\begin{equation}
g_{2k}^{\det} = O(L r_{2k}^{\det}) =  O\biggl(tL^2\biggl(\frac{tL}{\epsilon}\biggr)^{\frac{1}{2k}}\biggr)
\end{equation}
elementary gates. Comparing to \eq{randpfgates}, we see that the randomized product formula strictly improves the complexity as a function of $L$. Indeed, the $(2k)$th-order randomized approach either provides an improvement with respect to all parameters of interest over the $(2k)$th order deterministic approach (if the first term of \eq{randpfgates} obtains the maximum), or has better dependence on the number of terms in the Hamiltonian than any deterministic formula (if the second term dominates).

We can also compare our result to the commutator bound of \cite{CMNRS18}, which depends on the specific structure of the Hamiltonian. For concreteness, we consider a one-dimensional nearest-neighbor Heisenberg model with a random magnetic field, as studied in \cite{CMNRS18}. Specifically, let
\begin{align}
H=\sum_{j=1}^n (\vec \sigma_j \cdot \vec \sigma_{j+1} + h_j \sigma_j^z)
\label{eq:heisenberg}
\end{align}
with periodic boundary conditions (i.e., $\vec\sigma_{n+1} = \vec\sigma_1$), and $h_j \in [-h,h]$ chosen uniformly at random, where $\vec \sigma_j = (\sigma^x_j,\sigma^y_j,\sigma^z_j)$ denotes a vector of Pauli $x$, $y$, and $z$ matrices on qubit $j$. The $(2k)$th-order deterministic formula with the commutator bound has error at most \cite[Eq.~(146)]{CMNRS18}
\begin{equation}
\norm{\mathcal{V}(-it)-\big[\mathcal{S}_{2k}\big(-it/r\big)\big]^r}_\diamond
\leq O\biggl(\frac{(tL)^{2k+2}}{r^{2k+1}}+\frac{t^{2k+1}L^{2k}}{r^{2k}}\biggr),
\end{equation}
where we have again used \lem{mix} (or \cite[Lemma 7]{BCK15}) to relate the spectral-norm distance to the diamond-norm distance. To guarantee that the simulation error is at most $\epsilon$, it suffices to choose
\begin{equation}
\begin{aligned}
r_{2k}^\comm&=\max\biggl\{O\biggl(\frac{(tL)^{\frac{2k+2}{2k+1}}}{\epsilon^{\frac{1}{2k+1}}}\biggr),
O\biggl(\frac{t^{\frac{2k+1}{2k}}L}{\epsilon^{\frac{1}{2k}}}\biggr)\biggr\}\\
&=\max\biggl\{O\biggl(tL\biggl(\frac{tL}{\epsilon}\biggr)^{\frac{1}{2k+1}}\biggr),
O\biggl(tL\biggl(\frac{t}{\epsilon}\biggr)^{\frac{1}{2k}}\biggr)\biggr\}
\end{aligned}
\end{equation}
segments, giving an algorithm with
\begin{equation}
g_{2k}^\comm=O(L r_{2k}^\comm)
=\max\biggl\{O\biggl(tL^2\biggl(\frac{tL}{\epsilon}\biggr)^{\frac{1}{2k+1}}\biggr),
O\biggl(tL^2\biggl(\frac{t}{\epsilon}\biggr)^{\frac{1}{2k}}\biggr)\biggr\}
\label{eq:commpfgates}
\end{equation}
elementary gates. Comparing to the corresponding bound \eq{randpfgates} for randomized product formulas, we see that the only difference is that the exponent $1/(2k+1)$ for the commutator bound becomes $1/(4k+1)$ in the randomized case. Thus the randomized approach can provide a slightly faster algorithm despite using less information about the structure of the Hamiltonian. More specifically, the relationship between $t$ and $L$ determines whether the randomized approach offers an improvement. If $t = \Omega(L^{2k})$, then the second term of \eq{commpfgates} achieves the maximum, and both approaches have asymptotic complexity $O\big(tL^2\bigl(\frac{t}{\epsilon}\big)^{\frac{1}{2k}}\bigr)$. However, if $t = o(L^{2k})$, then the randomized formula is advantageous.

%%%%%%%%%%%%%%%%%%%%%%%%%%%%%%%%%%%%%%%%%%%%%%%%%%%%%%%%%%%%%%%%%%%%%%%%%%%%%%
\section{Empirical performance}
\label{sec:numerics}
%!TEX root = randsim.tex

While randomization provides a useful theoretical handle for establishing better provable bounds, those bounds may still be far from tight. As described in \sec{intro}, our original motivation for considering randomization was the observation that product formulas appear to perform dramatically better in practice than the best available proven bounds would suggest. To investigate the empirical behavior of product formulas, we numerically evaluate their performance for simulations of the Heisenberg model \eq{heisenberg} with $t=n$ and $h=1$, targeting error $\epsilon=10^{-3}$, as previously considered in \cite{CMNRS18}. We collect data for the first-, fourth-, and sixth-order formulas as the latter two orders have the best performance in practice for small $n$ and the first-order formula offers a qualitatively better theoretical improvement.

For the deterministic formula, we order the operators of the Hamiltonian in the same way as \cite{CMNRS18}, namely
\begin{align}
\sigma_1^x \sigma_2^x, \ldots, \sigma_{n-1}^x \sigma_n^x, \sigma_n^x \sigma_1^x, \;
\sigma_1^y \sigma_2^y, \ldots, \sigma_{n-1}^y \sigma_n^y, \sigma_n^y \sigma_1^y, \;
\sigma_1^z \sigma_2^z, \ldots, \sigma_{n-1}^z \sigma_n^z, \sigma_n^z \sigma_1^z,\;
\sigma_1^z, \ldots, \sigma_n^z.
\label{eq:cycletermorder}
\end{align}
We compute the error in terms of the spectral-norm distance and convert it to the diamond-norm distance using Lemma 7 of \cite{BCK15} (i.e., we multiply by $2$). To analyze the randomized formula, we would like to numerically evaluate the diamond-norm distances
\begin{equation}
\norm{\mathcal{V}(-it)-\frac{1}{2^r}\bigl(\mathcal{S}_1(-it/r)+\mathcal{S}_1^\rev(-it/r)\bigr)^r}_\diamond
\end{equation}
and
\begin{equation}
\label{eq:diamond_err}
\norm{\mathcal{V}(-it)-\biggl(\frac{1}{L!}\sum_{\sigma\in\Sym(L)}\mathcal{S}_{2k}^\sigma\bigl(-it/r\bigr)\biggr)^r}_\diamond.
\end{equation}
While the diamond norm can be computed using a semidefinite program \cite{Wat13}, direct computation is prohibitive as the channel contains $(L!)^r$ Kraus operators. Instead, we use \lem{mix} to estimate the error. We randomly choose the ordering of the summands in each of the $r$ segments, exponentiate each individual operator, and construct a unitary operator by concatenating the exponentials according to the given product formula. We follow this procedure to obtain a Monte Carlo estimate of the average error
\begin{equation}
\norm{V(-it)-\frac{1}{M}\sum_{m=1}^{M}S_{2k}^{\sigma_{m,r}}\bigl(-it/r\bigr)\cdots S_{2k}^{\sigma_{m,1}}\bigl(-it/r\bigr)}
\end{equation}
for the ($2k$)th-order formula and similarly for the first-order case. Here, $M$ is the number of samples in the Monte Carlo estimation, which can be increased to get more accurate estimate. In practice, we find that it suffices to take only three samples, as the standard deviations are already negligibly small (about $10^{-5}$). We then invoke \lem{mix} to bound the diamond-norm error in \eq{diamond_err}. To the extent that the bound of \lem{mix} is loose, we expect the empirical performance to be better in practice.

%%%%%%%%%%%%%%%%%%%%%%%%%%%%%%%%%%%%%%%%%%%%%%%%%%%%%%%%%%%%%%%%%%%%%%%%%%%%%%%

\begin{figure}[t]
	\centering
	\begin{subfigure}{.33\linewidth}
		\resizebox{.95\textwidth}{!}{
			\begin{tikzpicture}
			\begin{axis}[
			log x ticks with fixed point,
			xtick={6,7,8,9,10},
			xmode=log,
			ymode=log,
			xmin = 5,
			xmax = 12,
			ymin=10^3,
			ymax=10^6,
			width=10cm,
			ymajorgrids=true,
			yminorgrids=true,
			legend style={at={(0.95,0.05)},anchor=south east},
			% font=\fontsize{8}{5}\selectfont},
			%			x label style={at={(axis description cs:0.5,-0.05)},anchor=north},
			%			y label style={at={(axis description cs:-0.03,.5)},anchor=south},
			xlabel={$n$},
			ylabel={$r$}, ylabel near ticks
			% every axis legend/.append style={nodes={right},
			% 	every x tick label/.append style={font=\small},
			% 	every y tick label/.append style={font=\small},
			% }
			]
			
			\addlegendimage{empty legend}
			\addlegendentry[yshift=0pt]{\hspace{-.6cm}\textbf{First order}}
			
			\addplot[only marks,color=blue, mark=o,error bars/.cd,y dir=both,y explicit] coordinates {
				(6,184195.200000000) +- (0,19482.4528666182)
				(7,203581.400000000) +- (0,48659.6127943904)
				(8,288667.600000000) +- (0,53462.8840486557)
				(9,418935.800000000) +- (0,108816.890459156)
				(10,484212.200000000) +- (0,91375.0444880877)
			};
			\addlegendentry{Deterministic}
			
			\addplot[only marks,color=red, mark=o] coordinates {
				(6,7762.80000000000) +- (0,521.047214751216)
				(7,9496.40000000000) +- (0,932.050588755782)
				(8,13617.4000000000) +- (0,787.384785222575)
				(9,15578.2000000000) +- (0,853.978454060757)
				(10,19125.8000000000) +- (0,632.350535699939)
			};
			\addlegendentry{Randomized}
			
			\addplot[
			color = black,
			mark = none
			]	coordinates {
				( 1, 4142.60599825257 )
				( 100, 56044889.2195794 )
			};
			
			\addplot[
			color = black,
			mark = none
			]	coordinates {
				( 1, 300.010102449959 )
				( 100, 1225551.83899852 )
			};
			\end{axis}
			\end{tikzpicture}
		}
	\end{subfigure}%
	%------------------------------------------------------------------------------
	\begin{subfigure}{.33\linewidth}
		\resizebox{.95\textwidth}{!}{
			\begin{tikzpicture}
			\begin{axis}[
			log x ticks with fixed point,
			xtick={6,7,8,9,10},
			xmode=log,
			ymode=log,
			xmin = 5,
			xmax = 12,
			ymin=10^1,
			ymax=10^3,
			width=10cm,
			ymajorgrids=true,
			yminorgrids=true,
			legend style={at={(0.95,0.05)},anchor=south east},
			% font=\fontsize{8}{5}\selectfont},
			%			x label style={at={(axis description cs:0.5,-0.05)},anchor=north},
			%			y label style={at={(axis description cs:-0.03,.5)},anchor=south},
			xlabel={$n$},
			ylabel={$r$}, ylabel near ticks
			% every axis legend/.append style={nodes={right},
			% 	every x tick label/.append style={font=\small},
			% 	every y tick label/.append style={font=\small},
			% }
			]
			
			\addlegendimage{empty legend}
			\addlegendentry[yshift=0pt]{\hspace{-.25cm}\textbf{Fourth order}}
			
			\addplot[only marks,color=blue, mark=square,error bars/.cd,y dir=both,y explicit] coordinates {
				%				(5,57.2000000000000) +- (0,1.78885438199983)
				(6,81.4000000000000) +- (0,2.19089023002066)
				(7,101.800000000000) +- (0,4.86826457785524)
				(8,123.200000000000) +- (0,2.77488738510232)
				(9,146.800000000000) +- (0,4.60434577328854)
				(10,173) +- (0,3.67423461417477)
			};
			\addlegendentry{Deterministic}
			
			\addplot[only marks,color=red, mark=square] coordinates {
				%				(5,53.4000000000000) +- (0,0.894427190999916)
				(6,72.6000000000000) +- (0,0.547722557505166)
				(7,88.6000000000000) +- (0,2.60768096208106)
				(8,108.600000000000) +- (0,1.51657508881031)
				(9,128) +- (0,1.73205080756888)
				(10,151.200000000000) +- (0,1.09544511501033)
			};
			\addlegendentry{Randomized}
			
			\addplot[
			color = black,
			mark = none
			]	coordinates {
				( 1, 5.821105514916931 )
				( 100, 5081.82649541254 )
			};
			
			\addplot[
			color = black,
			mark = none
			]	coordinates {
				( 1, 5.45762090439935 )
				( 100, 4114.18461500955 )
			};
			\end{axis}
			\end{tikzpicture}
		}
	\end{subfigure}
	%------------------------------------------------------------------------------
	\begin{subfigure}{.33\linewidth}
		\resizebox{.95\textwidth}{!}{
			\begin{tikzpicture}
			\begin{axis}[
			log x ticks with fixed point,
			xtick={6,7,8,9,10},
			ytick={10,100},
			xmode=log,
			ymode=log,
			xmin = 5,
			xmax = 12,
			ymin=10^1,
			ymax=10^2,
			width=10cm,
			ymajorgrids=true,
			yminorgrids=true,
			legend style={at={(0.95,0.05)},anchor=south east},
			% font=\fontsize{8}{5}\selectfont},
			%	x label style={at={(axis description cs:0.5,-0.05)},anchor=north},
			%	y label style={at={(axis description cs:-0.03,.5)},anchor=south},
			xlabel={$n$},
			ylabel={$r$}, ylabel near ticks
			% every axis legend/.append style={nodes={right}
			% 	every x tick label/.append style={font=\small},
			% 	every y tick label/.append style={font=\small},
			% }
			]
			
			\addlegendimage{empty legend}
			\addlegendentry[yshift=0pt]{\hspace{-.6cm}\textbf{Sixth order}}
			
			\addplot[only marks,color=blue, mark=hexagon] coordinates {
				%		(5,14.2000000000000) +- (0,0.447213595499958)
				(6,21.8000000000000) +- (0,1.09544511501033)
				(7,25.4000000000000) +- (0,0.894427190999916)
				(8,31.2000000000000) +- (0,1.09544511501033)
				(9,35) +- (0,1.41421356237310)
				(10,38.8000000000000) +- (0,0.447213595499958)
			};
			\addlegendentry{Deterministic}
			
			\addplot[only marks,color=red, mark=hexagon] coordinates {
				%		(5,16.2000000000000) +- (0,0.447213595499958)
				(6,22) +- (0,0)
				(7,26.6000000000000) +- (0,0.894427190999916)
				(8,30.8000000000000) +- (0,1.09544511501033)
				(9,35.2000000000000) +- (0,1.09544511501033)
				(10,39.8000000000000) +- (0,0.447213595499958)
			};
			\addlegendentry{Randomized}
			
			\addplot[
			color = black,
			mark = none
			]	coordinates {
				( 1, 2.71914388470343 )
				( 100, 568.523677947132 )
			};
			
			\addplot[
			color = black,
			mark = none
			]	coordinates {
				( 1, 2.80384591821328 )
				( 100, 2.80384591821328 )
			};
			\end{axis}
			\end{tikzpicture}
		}
	\end{subfigure}
	\caption{ Comparison of the values of $r$ between deterministic and randomized product formulas. Error bars are omitted when they are negligibly small on the plot. Straight lines show power-law fits to the data. \label{fig:extrapolate}}
\end{figure}
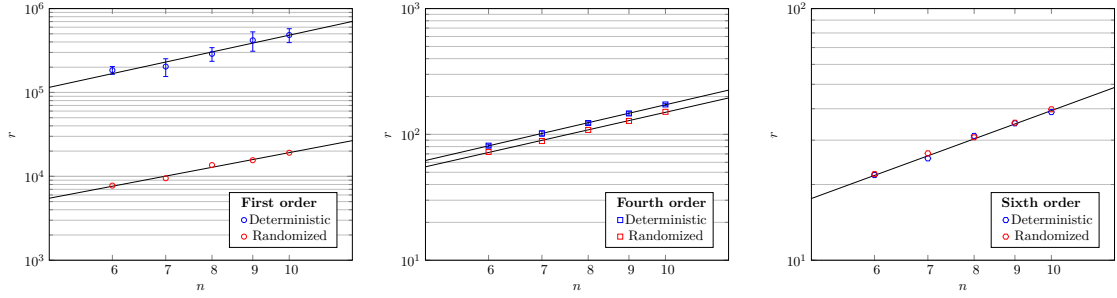

%%%%%%%%%%%%%%%%%%%%%%%%%%%%%%%%%%%%%%%%%%%%%%%%%%%%%%%%%%%%%%%%%%%%%%%%%%%%%%%

Using five randomly generated instances for each value of $n$, we apply binary search to determine the smallest number of segments $r$ that suffices to give error at most $10^{-3}$. \fig{extrapolate} shows the resulting data for the first-, fourth-, and sixth-order formulas, which are well-approximated by power laws. Fitting the data, we estimate that
\begin{align}
r_1^{\remp} &= 300.0 n^{1.806} &
r_4^{\remp} &= 5.458 n^{1.439} &
r_6^{\remp} &= 2.804 n^{1.152}
\end{align}
segments should suffice to give error at most $10^{-3}$. We thus observe that the empirical complexity of the randomized algorithm is still significantly better than the provable performance
\begin{align}
r_1^{\rand} &= O(n^{3}) &
r_4^{\rand} &= O(n^{2.25}) &
r_6^{\rand} &= O(n^{2.17}).
\end{align}
For comparison, analogous empirical fits for deterministic formulas give the comparable values
\begin{align}
r_1^{\demp} &= 4143 n^{2.066} &
r_4^{\demp} &= 5.821 n^{1.471} &
r_6^{\demp} &= 2.719 n^{1.160},
\end{align}
(cf.\ \cite[Eq.\ (147)]{CMNRS18}, but note that we have generated new data using \cite[Lemma 7]{BCK15} to bound the diamond-norm distance in terms of the spectral-norm distance),
whereas the rigorous commutator bound gives the larger exponents \cite{CMNRS18}
\begin{align}
r_1^{\comm} &= O(n^{3}) &
r_4^{\comm} &= O(n^{2.4}) &
r_6^{\comm} &= O(n^{2.28}).
\end{align}

We see that the randomized bound offers significantly better empirical performance at first order, consistent with the observation that randomization improves the order of approximation in this case. The fourth-order formula slightly improves both the exponent and the constant factor. While this improvement is small, it is nevertheless notable since it involves only a minor change to the algorithm. At sixth order we see negligible improvement. Since the proven bounds give less improvement with each successive order, it is perhaps not surprising to see that the empirical performance shows similar behavior.

%%%%%%%%%%%%%%%%%%%%%%%%%%%%%%%%%%%%%%%%%%%%%%%%%%%%%%%%%%%%%%%%%%%%%%%%%%%%%%%

\begin{figure}[t]
	\centering
	\resizebox{\textwidth}{!}{
		\begin{tikzpicture}
		\begin{axis}[
		log x ticks with fixed point,
		xtick={10,20,30,40,50,60,70,80,90,100},
		xmode=log,
		ymode=log,
		xmin=10,
		ymin=10^5,
		ymax=10^13,
		width=1.25\textwidth,
		height=1.\textwidth,
		ymajorgrids=true,
		yminorgrids=true,
		legend style={at={(0.015,0.985)},anchor=north west, font=\fontsize{8}{5}\selectfont},
		x label style={at={(axis description cs:0.5,-0.05)},anchor=north},
		y label style={at={(axis description cs:-0.03,.5)},anchor=south},
		xlabel={$n$}, xlabel near ticks,
		ylabel={number of exponentials}, ylabel near ticks,
		every axis legend/.append style={nodes={right},
			every x tick label/.append style={font=\small},
			every y tick label/.append style={font=\small},
		}
		]
		
		\addplot[only marks,color=blue, mark=square*] coordinates {
			(13,162839040)
			(16,336624000)
			(20,734698400)
			(25,1603596000)
			(32,3803065600)
			(40,8301592000)
			(50,18121970000)
			(63,40679150400)
			(79,89796790960)
			(100,204861400000)
		};
		\addlegendentry{Deterministic PF4 (min)}
		
		\addplot[only marks,color=blue, mark=square*,fill opacity=.2] coordinates {
			(13,14354600)
			(16,28790400)
			(20,60599200)
			(25,127149000)
			(32,287801600)
			(40,601040000)
			(50,1253376000)
			(63,2680274520)
			(79,5637762320)
			(100,12225628000)
		};
		\addlegendentry{Deterministic PF4 (com)}
		
		\addplot[only marks,color=red, mark=square*] coordinates {
			(13,51588680)
			(16,101219200)
			(20,208853600)
			(25,430967000)
			(32,960514560)
			(40,1982182400)
			(50,4090730000)
			(63,8663835600)
			(79,18066279320)
			(100,38842932000)
		};
		\addlegendentry{Randomized PF4 (min)}
		
		\addplot[only marks,color=blue, mark=square] coordinates {
			(13,138613.298664823)
			(16,236823.294609577)
			(20,421127.268682011)
			(25,748862.888340203)
			(32,1415666.67417323)
			(40,2517386.81721139)
			(50,4476503.19321865)
			(63,8125584.81484087)
			(79,14567816.5845061)
			(100,26759345.6037218)
		};
		\addlegendentry{Deterministic PF4 (emp)}
		
		\addplot[only marks,color=red, mark=square] coordinates {
			(13,115202.958105535)
			(16,192535.918371677)
			(20,334360.242688319)
			(25,580654.107743031)
			(32,1069293.40035558)
			(40,1856948.06388137)
			(50,3224798.83519913)
			(63,5711694.63843625)
			(79,9997067.82182107)
			(100,17909677.0156733)
		};
		\addlegendentry{Randomized PF4 (emp)}

		\addplot[only marks,color=blue, mark=hexagon*] coordinates {
			(13,1491510800)
			(16,2978048000)
			(20,6261556000)
			(25,13165960000)
			(32,29960537600)
			(40,63002464000)
			(50,132489490000)
			(63,286120220400)
			(79,608117478400)
			(100,1333687280000)
		};
		\addlegendentry{Deterministic PF6 (min)}
		
		\addplot[only marks,color=red, mark=hexagon*] coordinates {
			(13,427780600)
			(16,824691200)
			(20,1669800000)
			(25,3381065000)
			(32,7379513600)
			(40,14943536000)
			(50,30261770000)
			(63,62848384200)
			(79,128563904800)
			(100,270945760000)
		};
		\addlegendentry{Randomized PF6 (min)}
		
		\addplot[only marks,color=blue, mark=hexagon] coordinates {
			(13,138597.703306961)
			(16,217046.016593673)
			(20,351473.654799735)
			(25,569159.167060636)
			(32,970117.253271605)
			(40,1570960.21361210)
			(50,2543935.78140109)
			(63,4191045.71776868)
			(79,6833400.51482738)
			(100,11370473.5589426)
		};
		\addlegendentry{Deterministic PF6 (emp)}
		
		\addplot[only marks,color=red, mark=hexagon] coordinates {
			(13,140106.275931209)
			(16,219056.190973722)
			(20,354116.806008843)
			(25,572449.980712700)
			(32,973864.152151314)
			(40,1574306.85484576)
			(50,2544956.67361752)
			(63,4185235.65739779)
			(79,6811986.69442360)
			(100,11314184.1013371)
		};
		\addlegendentry{Randomized PF6 (emp)}
		\end{axis}
		\end{tikzpicture}
	}
	\caption{Comparison of the total number of elementary exponentials for product formula simulations of the Heisenberg model using deterministic and randomized product formulas of fourth and sixth order with both rigorous and empirical error bounds. Note that since the empirical performance of deterministic and randomized sixth-order product formulas is almost the same, the latter data points are obscured by the former.}
	\label{fig:compare}
\end{figure}
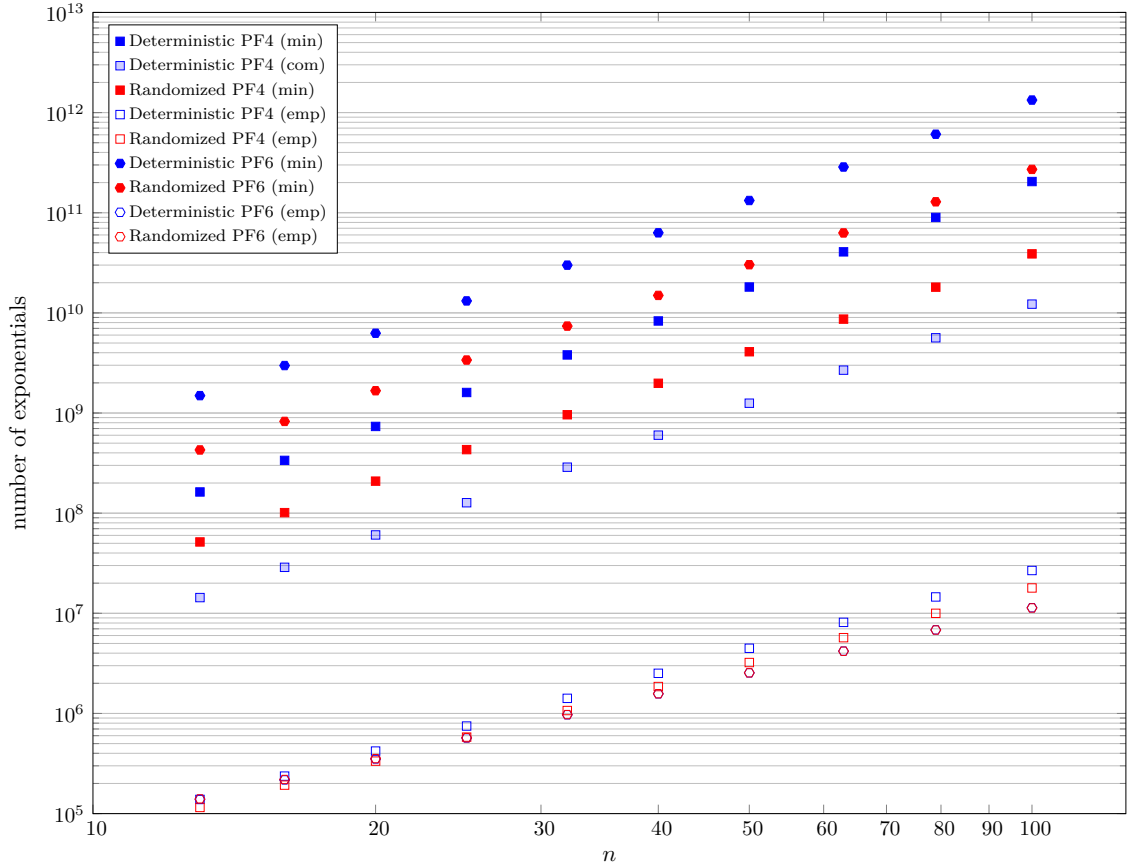

%%%%%%%%%%%%%%%%%%%%%%%%%%%%%%%%%%%%%%%%%%%%%%%%%%%%%%%%%%%%%%%%%%%%%%%%%%%%%%%

To illustrate the effect of using different formulas and different error bounds to simulate larger systems, \fig{compare} compares the cost of simulating our model system for sizes up to $n=100$ with deterministic and randomized formulas of orders $4$ and $6$, using both proven error bounds and the above empirical estimates. (We omit the first-order formula since it is not competitive even at such small sizes.) We give rigorous bounds for deterministic formulas using the minimized bound of \cite{CMNRS18}, and for fourth order we also show the result of using the commutator bound. We see that randomization gives a significant improvement over the deterministic formula using the minimized bound, although the commutator bound outperforms the randomized bound at the system sizes shown here. For sufficiently large $n$, the randomized bound gives lower complexity, but this requires a fairly large $n$ since the difference in exponents is small and the commutator bound achieves a favorable constant prefactor. Empirical estimates of the error improve the performance by several orders of magnitude, with randomization giving a small advantage for the fourth-order formula as indicated above. However, for systems of size larger than about $n=25$, the sixth-order bound prevails, and in this case randomization no longer offers a significant advantage.

%%%%%%%%%%%%%%%%%%%%%%%%%%%%%%%%%%%%%%%%%%%%%%%%%%%%%%%%%%%%%%%%%%%%%%%%%%%%%%
\section{Discussion}
\label{sec:discussion}
%!TEX root = randsim.tex

We have shown that randomization can be used to establish better performance for quantum simulation algorithms based on product formulas. By simply randomizing how the summands in the Hamiltonian are ordered, we introduce terms in the average evolution that could not appear in any deterministic product formula approximation of the same order, and thereby give a more efficient algorithm. Indeed, this approach can outperform the commutator bound even though that method uses more information about the structure of the Hamiltonian. A randomized product formula simulation algorithm is not much more complicated than the corresponding deterministic formula, using only $O(L \log L)$ bits of randomness per segment and no ancilla qubits. Furthermore, we showed that randomization can even offer improved empirical performance in some cases.

While randomization has allowed us to make some progress on the challenge of proving better bounds on the performance of product formulas, our strengthened bounds remain far from the apparent empirical performance. We expect that other ideas will be required to improve the product-formula approach \cite{CS19,LKW19}. Although our bounds have better asymptotic $n$-dependence than the previous commutator bound, they only offer an improvement if the system is sufficiently large. It could be fruitful to establish bounds for randomized product formulas that take advantage of the structure of the Hamiltonian, perhaps offering better performance both asymptotically and for small system sizes. More generally, it may be of interest to investigate other scenarios in which random choices can be used to improve the analysis of quantum simulation \cite{Campbell18,BCSWW19} and other quantum algorithms.

%%%%%%%%%%%%%%%%%%%%%%%%%%%%%%%%%%%%%%%%%%%%%%%%%%%%%%%%%%%%%%%%%%%%%%%%%%%%%%
\section*{Acknowledgments}
We thank Guoming Wang for helpful discussions during the initial stages of this work and anonymous referees for their helpful comments on our manuscript.

This work was supported in part by the Army Research Office (MURI award W911NF-16-1-0349), the Canadian Institute for Advanced Research, the Department of Energy (grant 17-020469), and the National Science Foundation (grant 1526380).

%%%%%%%%%%%%%%%%%%%%%%%%%%%%%%%%%%%%%%%%%%%%%%%%%%%%%%%%%%%%%%%%%%%%%%%%%%%%%%
%\bibliographystyle{myhamsplain2}
\bibliographystyle{plainnat}
\bibliography{randsim}

\end{document}